\documentclass[a4]{scrartcl}
%\documentclass{siamonline220329}
% arxiv
\usepackage{amsthm}
\usepackage[english]{babel}
\usepackage{amsfonts}
\usepackage{mathtools}
\usepackage{algorithm}
\usepackage[noend]{algpseudocode}
\usepackage[font=normalsize,labelfont=sf,textfont=sf]{subfig}
\captionsetup{format=plain}
\usepackage{stfloats}
\usepackage{url}
\usepackage{graphicx}
\usepackage{comment}
\usepackage{xcolor}
\usepackage{booktabs}
\usepackage{siunitx}
\sisetup{exponent-product = \cdot}
\usepackage{stackrel}
\usepackage{enumitem}
% checks and crosses
\usepackage{pifont}
\usepackage{tablefootnote}
\usepackage{hyperref}
\usepackage{multirow}

%
% Basic Commands
%
% ------------------------------------------------------------------------------

% Folowing the philosophy, that one should use _semantic_ commands and not styling in the text itself.
% 1 --- fixed letters

\DeclareMathOperator{\R}{{\mathbb{R}}}
\newcommand{\spline}{{\,\varphi}}

% Style vectors and matrices
\renewcommand{\vec}[1]{\mathbf{#1}}
\newcommand{\mat}[1]{\mathbf{#1}}

% shortcuts
\newcommand{\param}{\boldsymbol{\theta}} % param set
% 2 --- Operators

 %span already reserved
\DeclareMathOperator*{\argmin}{arg\,min} %handles subscripts like \lim
 %handles subscripts like \lim
 %handles subscripts like \lim
\DeclareMathOperator*{\essinf}{ess\,inf} %handles subscripts like \lim
 %handles subscripts like \lim
\DeclareMathOperator{\prox}{prox} %handle normal subscripts

 % span already taken

%\DeclareMathOperator{\TV}{TV}
%\DeclareMathOperator{\TV{2}}{TV^{(2)}}

\renewcommand{\vec}[1]{\mathbf{#1}}

\newcommand{\N}{\mathbb{N}}

\DeclareMathOperator{\Lip}{Lip}

%\DeclareMathOperator{\BV}{BV}
%\DeclareMathOperator{\BV{2}}{BV^{(2)}}

%\newsiamthm{remark}{Remark}

% arxiv
\newtheorem{theorem}{Theorem}[section]
\newtheorem{lemma}[theorem]{Lemma}
\newtheorem{remark}[theorem]{Remark}
\newtheorem{definition}[theorem]{Definition}

\newtheorem{proposition}[theorem]{Proposition}

\title{Learning Weakly Convex Regularizers for \\Convergent Image-Reconstruction Algorithms}
\date{\today}
\author{Alexis Goujon, Sebastian Neumayer, Michael Unser}

\begin{document}
\maketitle
\renewcommand*{\thefootnote}{\fnsymbol{footnote}}
\footnotetext[1]{The authors are with the Biomedical Imaging Group, \'Ecole polytechnique f\'ed\'erale de Lausanne (EPFL),
	Station 17, CH-1015 Lausanne, {\text \{forename.name\}@epfl.ch}. }
\renewcommand*{\thefootnote}{\arabic{footnote}}

\begin{abstract}
    We propose to learn non-convex regularizers with a prescribed upper bound on their weak-convexity modulus.
    Such regularizers give rise to variational denoisers that minimize a convex energy.
    They rely on few parameters (less than 15,000) and offer a signal-processing interpretation as they mimic handcrafted sparsity-promoting regularizers.
    Through numerical experiments, we show that such denoisers outperform convex-regularization methods as well as the popular BM3D denoiser.
    Additionally, the learned regularizer can be deployed to solve inverse problems with iterative schemes that provably converge.
    For both CT and MRI reconstruction, the regularizer generalizes well and offers an excellent tradeoff between performance, number of parameters, guarantees, and interpretability when compared to other data-driven approaches.
    %is on par with frameworks that rely on deep CNNs with few orders of magnitude more parameters.

    %Since we are able to efficiently control the modulus of weak convexity, the specific case of denoising can be guaranteed to be a convex optimization problem.
    %Hence, we can utilize standard convex optimization techniques, which provably converge to global minima.
    %In our numerical denoising experiments, we outperform the popular BM3D denoiser and several variational approaches that come with convergence guarantees across various noise scales.
    %Inspired by these findings, we also employ the learned regularizer for inverse problems, namely MRI and CT. 
    %As the interplay between forward operators and weak convexity is complicatd, we instead propose to use our regularizer in a PnP fashion, where we can guarantee convergence to stationary points.
    %Our numerical results confirm that our learned regularizer generalized well to generic inverse problems.
\end{abstract}
% 68U10: Image processing
% 47A52: Ill-posed problems, regularization
% 49N45: Inverse problems (Calculus of variations and optimal control; optimization)
% 65D07: splines
% 68T05: Learning and adaptive systems
% 26B25: Convexity, generalizations (Real functions, Functions of several variables)
% 47N10: Applications in optimization, convex analysis, mathematical programming, economics (Miscellaneous applications of operator theory)
% 90C26: Nonconvex programming, global optimization (Mathematical programming)

\section{Introduction}\label{sec:Intro}
Linear inverse problems are ubiquitous in imaging, with applications in medical imaging~\cite{MM2019}, including magnetic-resonance imaging (MRI) and X-ray
computed tomography (CT).
In a discretized linear inverse problem~\cite{ribes2008linear}, the goal is to reconstruct an (unknown) image of interest $\vec x\in\R^d$ from a given noisy observation
\begin{equation}\label{eq:InvProb}
    \vec y = \vec H\vec x + \vec n \in\R^m,
\end{equation}
where $\vec H \in \R^{m\times d}$ denotes the measurement operator and $\vec n\in\R^m$ is a noise term.
To overcome a possibly ill-conditioned $\vec H$ and the presence of noise, it is standard to compute the reconstruction $\hat{\vec x}$ as a solution of the variational problem
\begin{equation}\label{eq:VarProb}
    \hat{\vec x} = \argmin \limits_{\vec x \in \R^d} \frac{1}{2}\Vert \vec H\vec x - \vec y \Vert_2^2 + R(\vec x),
\end{equation}
where the regularizer $R\colon \R^d\rightarrow \R$ incorporates prior information about $\vec x$.
Convex regularizers, such as the Tikhonov \cite{tikhonov1963} or total-variation (TV) \cite{rudin1992nonlinear, donoho2006compressed} ones, are popular as they allow one to efficiently solve \eqref{eq:VarProb}.
Unfortunately, such regularizers do not yield state-of-the-art reconstructions and have known limitations.
For instance, they typically struggle to preserve textures in the image $\vec x$ \cite{Nikolova2015}.
\subsection{The Convex Non-Convex Framework for Denoising}
The reliance on a well-chosen non-convex $R$ leads to improved performance \cite{RotBla2009, CheRan2014}, with the caveat that finding a global minimum of \eqref{eq:VarProb} becomes intractable in general.
A possible remedy is provided by the convex non-convex (CNC) framework.
It consists in the deployment of a non-convex $R_{\mathrm{CNC}}$ such that the global objective
\begin{equation}
    \mathcal{J}(\vec x) = \frac{1}{2}\Vert \vec H\vec x - \vec y \Vert_2^2 + R_{\mathrm{CNC}}(\vec x)
\end{equation}
is convex, see \cite{Lanza2021} for an overview.
Over the past years, the use of CNC approaches has led to improved results in various settings, including dictionary learning \cite{TanLiZha2020}, plug-and-play (PnP) algorithms \cite{LiLiXie2021,HurLec2022}, and matrix completion \cite{Abe2020}.

For the case of image denoising, namely $\mat H = \mat I$, the data-fidelity term $\frac12\|\vec x-\vec y\|^2_2$ is \mbox{$1$-strongly} convex.
Hence, to ensure the convexity of $\mathcal{J}$, the regularizer $R_{\mathrm{CNC}}$ needs to be $1$-weakly convex from the definition of weak convexity (see Section \ref{sec:WeakConv}).
Various strategies have been proposed to design a weakly convex $R_{\mathrm{CNC}}$.
\paragraph{Explicit Design}
The commonly used $\Vert \cdot \Vert_1$-norm for sparse regularization can be replaced by a non-convex penalty function that better mimics the behavior of the $\Vert \cdot \Vert_0$-norm, while ensuring that one remains within the CNC framework.
This includes properly scaled versions of the logarithm and the minimax concave penalty \cite{LanMorSel2019}.
Although non-convex, these functions are quasi-convex. In particular, they are such that large values are more penalized than smaller ones.
The potentials are then combined with convolutional filters.
This yields, for instance, TV-like regularizers \cite{ZSZLLD2019, ScrChoPes2022}, which extend and improve upon their convex counterparts.
 \paragraph{Implicit Design with Moreau Envelopes}
 There is a systematic method to convert any convex regularizer into a non-convex but still 1-weakly convex one, utilizing its (generalized) Moreau envelope \cite{Abe2020,Lanza2021}.
Such a regularizer, however, does not admit a closed form, and the existing algorithms to solve \eqref{eq:VarProb} involve a computationally intensive bilevel optimization task.
 \paragraph{Implicit Design via the Learning of Proximal Operators}
Although $R_{\mathrm{CNC}}$ is non-convex, its proximity operator $\prox_{R_{\mathrm{CNC}}}$ is well-defined under mild conditions \cite{GriNik2020}.
In \cite{HurLec2022}, the authors propose to directly learn $\prox_{R_{\mathrm{CNC}}}$ such that it is a good Gaussian denoiser.
To do so, they explicitly parameterize the proximal operator, in line with the recently introduced gradient-step denoisers \cite{cohen2021has,hurault2022gradient}.
More precisely, they express the residual map $(\prox_{R_{\mathrm{CNC}}} - \mathrm{Id})$ as the gradient of a deep convolutional neural network (CNN), and require that the residual is contractive by enforcing that it has a Lipschitz constant smaller than 1.
This yields excellent performance, with the caveat that it is challenging to enforce strict Lipschitz constraints on the gradient of a CNN.
For this reason, the authors of \cite{HurLec2022} propose to regularize the spectral norm of the Jacobian of $(\prox_{R_{\mathrm{CNC}}} - \mathrm{Id})$ at a finite number of locations.
This method works well in practice but does not offer any provable guarantee on the weak-convexity property of the underlying (implicit) objective $\mathcal{J}$.

\subsection{Extension to Ill-Posed Inverse Problems}
The design of CNC models is difficult when the forward matrix $\mat H$ is noninvertible.
Since the data term $\frac{1}{2}\Vert \vec H\vec x - \vec y \Vert_2^2$ is not strongly convex anymore, the $1$-weak convexity of $R$ is not sufficient.
%The data term $\frac{1}{2}\Vert \vec H\vec x - \vec y \Vert_2^2$ might behave very differently for different directions, namely it can be $\spline$-strongly convex with large $\spline$ in one direction, while only being constant in others.
Then, the condition on $R$ depends on $\mat H$, and CNC models are therefore usually tailored to a specific problem.
One can partially circumvent this limitation by combining a proximal algorithm with a generic weakly convex regularizer, for which the proximal operator is well-defined.
%If $R$ is only known implicitly in terms of its proximal operator, this approach is commonly referred to as PnP methods \cite{venkatakrishnan2013plug}.
The convergence to stationary points of the objective is established in \cite{HurLec2022,HurChaLec2023} for the forward-backward splitting \cite{beck2009fast} based on the very general convergence result for functions with the Kurdyka-Łojasiewicz (KL) property given in \cite{AttBolSva2013}.
When $R$ is differentiable, as will be assumed in our setting, similar results can be obtained for gradient descent applied to the non-convex objective \eqref{eq:VarProb}, see \cite{AttBolSva2013}.
From a stochastic perspective, it is known that such first-order methods do not get trapped into strict saddle points of the objective \cite{Lee2019}.
This is a possible explanation for the good empirical performance of non-convex reconstruction frameworks.

\subsection{Other Deep-Learning-Based Variational Methods with Some Guarantees}
The emergence of deep-learning-based methods has led to significant improvements in the quality of reconstruction for inverse problems.
Yet, due to the blackbox nature of deep NNs, this often comes with a loss of interpretability and reliability.
Thus, there is a growing interest to mitigate these limitations, see \cite{MukHauOek2023} for a survey.
In the following, we briefly comment on works that rely on the variational formulation \eqref{eq:VarProb} with a learned regularizer $R$ but that are not directly within the CNC framework.
To provide maximal theoretical guarantees within iterative image reconstruction, it was proposed in \cite{MukDit2021} to learn a convex $R$ based on a deep CNN, and shown in \cite{GouNeuBoh2022} that a shallow model, namely, a convex ridge regularizer NN (CRR-NN) with few parameters, was sufficient.
The latter offers the opportunity to learn a collection of filters and sparsity-promoting profile functions to build $R$.
This is inspired by the Fields-of-Experts (FoE) framework \cite{RotBla2009} and its many variants, such as \cite{CheRan2014}, to design and learn a non-convex $R$.
While \cite{CheRan2014} yields good performance, it does not guarantee that the objective is convex.
Another popular extension of FoE is trainable nonlinear reaction diffusion (TRND)~\cite{chen2016trainable}.
There, the minimization scheme associated with \eqref{eq:VarProb} is unrolled and different filters and potential functions are learned at each step.
This improves the performance over \cite{CheRan2014} but does not correspond to an energy minimization anymore.
More recently, all these frameworks have been unified in the context of variational networks \cite{KKHP2017}.
The combination of these with recent findings in deep CNN research and early stopping techniques has then led to the total deep variation framework \cite{KobEff2020}.
Although this model has several layers, some interpretability remains possible through an eigenfunction analysis.
Another deep-learning-based variational method with convergence guarantees and a regularization scheme is found in~\cite{LiSch2020}.

\subsection{Outline and Main Contributions}
In this work, we propose a framework to learn a $1$-weakly convex regularizer that yields an interpretable proximal denoiser.
The general framework is introduced in Section~\ref{sec:WeakConv}.
Then, the principal contributions are as follows.
\begin{itemize}
    \item{\textbf{Denoising}:} In Section~\ref{sec:Denoising}, we propose a scheme for the training of weakly-convex-ridge-regularizer neural networks (WCRR-NN), with a significant increase in performance over their convex counterparts but with the same guarantees and interpretability. 
    Based on a condition introduced in Proposition~\ref{pr:weakcvxconstantBound}, the associated denoising problem is convex, which allows for global minimization.
    Numerical experiments indicate that the learning of both the profiles and the filters leads to a sparsity prior that is state-of-the-art in the CNC framework across various noise levels for the BSD68 test set.
    In particular, it is the first convex-energy-based model that outperforms BM3D \cite{DabFoiKat2007}, which has been one of the most popular benchmarks for nearly 15 years now.
    \item{\textbf{Inverse problems}:} In Section~\ref{sec:InverseProblems}, we deploy the learned regularizer to solve generic inverse problems by minimizing \eqref{eq:VarProb} with an accelerated gradient-descent (AGD) scheme that is tailored to our weakly convex regularizer (Algorithm \ref{alg:AGD}).
    Further, we prove that the algorithm reaches some critical point of the objective (Theorem~\ref{thm:ConvGD}).
    Numerical experiments for CT and MRI demonstrate that the regularizer empirically generalizes well. We find that it outperforms several energy-based reconstruction methods that come with convergence guarantees.
    
\end{itemize}
Finally, conclusions are drawn in Section~\ref{sec:Conclusions}.
The implementation of WCRR-NNs and pre-trained models are publicly available\footnote{\url{https://github.com/axgoujon/weakly_convex_ridge_regularizer}} as well as their usage to solve inverse problems\footnote{\url{https://github.com/axgoujon/convex_ridge_regularizers}}.

\section{Weakly Convex Regularizers}\label{sec:WeakConv}
Our goal is to construct a regularizer $R$ for the variational reconstruction model \eqref{eq:VarProb} that performs well across a variety of inverse problems, while maintaining the theoretical guarantees and interpretability of classical schemes.
A particularly promising direction is given by the CNC framework, where one can efficiently find a global minimum of the objective in \eqref{eq:VarProb}.
As commonly done in practice, our strategy is to design and train the regularizer based on the denoising task
\begin{equation}
    \label{eq:denoise}
    \hat{\vec{x}} = \argmin\limits_{\vec{x}\in\R^d} \frac{1}{2}\|\vec{x} - \vec{y}\|_2^2 + R(\vec{x}),
\end{equation}
where $\vec y$ is a noisy version of a clean image.
The minimization of \eqref{eq:VarProb} for generic inverse problems and weakly convex regularizers is then discussed in Section~\ref{sec:InverseProblems}.

To obtain a CNC model in \eqref{eq:denoise}, $R$ needs to be $1$-weakly convex so that the overall objective remains convex.
\begin{definition}
A function $f\colon \R^d \to \R$ is
\setlist[enumerate]{leftmargin=14mm, label=\roman*)}
\begin{enumerate}
    \item \emph{convex} if $f(\lambda \vec x + (1-\lambda)\vec y) \leq \lambda f(\vec x) + (1-\lambda)f(\vec y)$ for all $\vec x,\vec y \in \R^d$ and $\lambda \in [0,1]$;
    \item \emph{$\rho$-strongly convex} if $(f - \frac{\rho}{2} \Vert \cdot \Vert^2)$ is convex with $\rho\geq0$;
    \item \emph{$\rho$-weakly convex} if $f + \frac{\rho}{2} \Vert \cdot \Vert^2$ is convex with $\rho\geq0$.
\end{enumerate}
\end{definition}
Note that a $\rho$-weakly convex $R$ is also $\mu$-weakly convex for any $\mu \geq \rho$.
A convex $R$ is $\rho$-weakly convex for any $\rho\geq 0$ and, in particular, $0$-weakly convex.
For a differentiable $R$, convexity is equivalent to the monotonicity of $\boldsymbol{\nabla }R$.
Hence, a differentiable $R$ is $\rho$-weakly convex iff
\begin{equation}
\label{eq:wcvx_monotone}
    (\boldsymbol{\nabla}R(\vec y) - \boldsymbol{\nabla}R(\vec x))^T(\vec y - \vec x) \geq -\rho \|\vec y - \vec x\|_2^2,
\end{equation}
for any $\vec x, \vec y \in\R^d$.
Given a twice-differentiable $R$, $\rho$-weak convexity is equivalent to
\begin{equation}
    H_R(\vec x) \succeq -\rho \mathbf{I}
\end{equation}
for any $\vec x\in\R^d$, where $H_R(\vec x)$ denotes the Hessian of $R$ at $\vec x$.
In other words, the Hessian of a $\rho$-weakly convex function has all its eigenvalues in the range $[-\rho, + \infty)$.
\begin{remark}
\label{rk:Lip_implies_wcvx}
Any differentiable function $R$ with \mbox{$L$-Lipschitz} gradient is \mbox{$L$-weakly} convex.
This estimate is, however, not necessarily tight, in the sense that $R$ might also be $\rho$-weakly convex for some $0\leq \rho \ll L$ all the way to zero.
For instance, any convex $R$ with $L$-Lipschitz gradient is $L$-weakly convex but it is also trivially $0$-weakly convex because it is equivalent to being convex.
%A more precise estimate will be derived for our choice of $R$ to make the most of the additional flexibility offered by weak convexity (Proposition~\ref{pr:weakcvxconstantBound}).
\end{remark}
Weak convexity provides more flexibility, while still maintaining most of the desirable properties of usual convex-regularization frameworks.
In particular, the proximal operator
\begin{equation}\label{eq:prox}
\prox_{R}(\vec y) = \argmin\limits_{\vec{x}\in\R^d} \frac{1}{2}\|\vec{x} - \vec{y}\|_2^2 + R(\vec{x})
\end{equation}
is well-defined for any $\rho$-weakly convex $R$ with $\rho < 1$.
Indeed, the objective in \eqref{eq:prox} is \mbox{$(1 - \rho)$-strongly} convex, which ensures the existence of a unique minimizer.
The properties of the proximal operator in a generic non-convex setting are characterized in detail in \cite{GriNik2020}. The main implication here is the Lipschitz continuity of our denoiser \eqref{eq:prox} (Proposition~\ref{prop:ProxCNC}).
%summarize in Theorem~\ref{thm:ProxCNC} the results relevant to the present work.
\begin{proposition}[\cite{GriNik2020}]\label{prop:ProxCNC}
For any $\rho$-weakly convex regularizer $R$ with $\rho <1$, there exists a convex lower semi-continuous potential $g \colon \R^d \to \R$ such that $\prox_{R}(\vec x) \in \partial g(\vec x)$ holds for every $\vec x \in \R^d$.
Conversely, the subgradient of any such $g$ coincides with $\prox_R$ for some $R$ that is $1$-weakly convex on any convex subset of its domain.
Furthermore, $\prox_{R}$ is $(\frac{1}{1-\rho})$-Lipschitz, in the sense that
\begin{equation}
    \|\prox_{R}(\vec y_2) - \prox_{R}(\vec y_1)\|_2 \leq \frac{1}{1-\rho} \|\vec y_2 - \vec y_1\|_2
\end{equation}
for any $\vec y_1, \vec y_2 \in\R^d$.
More generally, $C^{k+1}$ regularity of the potential $g$ leads to $C^k$ regularity of $\prox_{R}$.
Finally, $\prox_{R}$ is invertible on its range in this setting.
\end{proposition}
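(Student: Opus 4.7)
The plan is to reduce everything to the convex case by exploiting the defining decomposition of a weakly convex function, and then to read off each conclusion (existence of the convex potential $g$, the Lipschitz bound, the regularity, and invertibility) from standard facts about proximal operators of convex functions. Since $R$ is $\rho$-weakly convex, there exists a convex function $\tilde R$ with $R=\tilde R-\tfrac{\rho}{2}\|\cdot\|_2^2$. Substituting this in the definition of $\prox_R$ and completing the square in the resulting strongly convex objective, I would obtain the identity
\begin{equation*}
    \prox_R(\vec y)\;=\;\prox_{\tilde R/(1-\rho)}\!\bigl(\vec y/(1-\rho)\bigr),
\end{equation*}
which is well-defined and single-valued because $\tilde R/(1-\rho)$ is convex and lsc. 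This identity is the workhorse for everything that follows.

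Next, I would construct the convex potential $g$. For the convex function $\tilde R/(1-\rho)$, the Moreau envelope $M_{\tilde R/(1-\rho)}$ is convex, finite-valued, and $1$-smooth, and satisfies $\nabla M_{\tilde R/(1-\rho)}(\vec z)=\vec z-\prox_{\tilde R/(1-\rho)}(\vec z)$. Hence the function $h(\vec z)=\tfrac12\|\vec z\|_2^2-M_{\tilde R/(1-\rho)}(\vec z)$ is convex (as the Fenchel conjugate of $\tilde R/(1-\rho)+\tfrac12\|\cdot\|_2^2$) and has gradient $\prox_{\tilde R/(1-\rho)}$. Setting $g(\vec y)=(1-\rho)\,h\!\bigl(\vec y/(1-\rho)\bigr)$ gives a convex lsc function whose (single-valued) subdifferential is exactly $\prox_R$. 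The Lipschitz bound is then immediate: $\prox_{\tilde R/(1-\rho)}$ is $1$-Lipschitz (a standard property of proximal operators of convex functions), and composing with the $\bigl(1/(1-\rho)\bigr)$-scaling yields the stated constant.

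For the regularity claim, I would note that $\prox_R=\nabla g$ whenever $g$ is differentiable, so $C^{k+1}$ regularity of $g$ directly implies $C^k$ regularity of $\prox_R$. Invertibility on the range follows because $g$ inherits strict convexity from the $(1-\rho)$-strong convexity of the objective defining $\prox_R$: if $\nabla g(\vec y_1)=\nabla g(\vec y_2)$, then the corresponding minimizers $\prox_R(\vec y_1)=\prox_R(\vec y_2)$ would satisfy the first-order optimality condition for two different data points, and one can back out $\vec y_i$ from this common minimizer via the inclusion $\vec y_i-\prox_R(\vec y_i)\in\partial\tilde R(\prox_R(\vec y_i))-\rho\,\prox_R(\vec y_i)$, forcing $\vec y_1=\vec y_2$.

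The delicate part, which I expect to be the main obstacle, is the converse statement: given any convex lsc $g$, exhibit an $R$, $1$-weakly convex on convex subsets of its domain, whose proximal map is (a selection of) $\partial g$. The natural candidate is to define $R$ implicitly through the inverse map of $\nabla g$ on its range, i.e.\ $R(\vec x)=g^*(\vec x)-\tfrac12\|\vec x\|_2^2$ restricted to $\operatorname{ran}\nabla g$, then extend it. One needs to verify that on convex subsets this extension is indeed $1$-weakly convex and that its proximal operator recovers $\partial g$; the qualifier "on any convex subset of its domain" suggests that global weak convexity may fail, and handling this carefully---as well as the non-smooth points where $\partial g$ is set-valued---is the technical heart of the argument. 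At this stage I would cite \cite{GriNik2020}, where this implicit characterization is worked out in full.
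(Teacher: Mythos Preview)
The paper does not supply its own proof of this proposition: it is quoted verbatim from \cite{GriNik2020}, as the citation in the proposition heading indicates, and the text immediately following is commentary rather than argument. So there is no in-paper proof to compare against; your sketch is in fact more detailed than anything the authors provide.

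That said, most of your outline is the standard and correct route: the rescaling identity $\prox_R(\vec y)=\prox_{\tilde R/(1-\rho)}\bigl(\vec y/(1-\rho)\bigr)$, the construction of $g$ via the Moreau decomposition (equivalently, via the conjugate of $\tilde R/(1-\rho)+\tfrac12\|\cdot\|^2$), the Lipschitz bound by composing a nonexpansive map with a $1/(1-\rho)$ dilation, and the regularity transfer from $g$ to $\nabla g=\prox_R$ are all fine. Deferring the converse to \cite{GriNik2020} is also appropriate, since that is where the delicate construction lives.

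Your invertibility argument, however, has a genuine gap. You claim that $g$ is strictly convex because the objective defining $\prox_R$ is $(1-\rho)$-strongly convex, but this implication is false. Strong convexity of the objective ensures that the minimizer is unique \emph{for each fixed $\vec y$}; it does not ensure that different $\vec y$'s produce different minimizers. A concrete counterexample: take $d=1$ and $R(x)=|x|$, which is convex and hence $0$-weakly convex. Then $\prox_R$ is the soft-thresholding operator, which collapses the whole interval $[-1,1]$ to $0$, so it is not injective and the associated $g$ is not strictly convex. Your attempt to ``back out'' $\vec y_i$ from the optimality inclusion fails precisely because $\partial\tilde R(\vec x)$ can be a nondegenerate set. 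Whatever the precise content of the invertibility clause in \cite{GriNik2020} is (it may carry extra hypotheses, or ``invertible on its range'' may be meant in a weaker sense than global injectivity), your current argument does not establish it; you should consult the source for the exact statement and its proof.
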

For a convex regularizer $R$, $\rho=0$ and, hence, $\prox_R$ is non-expansive ($1$-Lipschitz).
For a non-convex but weakly convex $R$, namely $\rho>0$, this is not necessarily the case anymore.
We conjecture that this is key to the boost in performance since non-expansive denoisers have intrinsic limitations, see for instance \cite[Fig.~1]{GouNeuBoh2022}.

\section{Design of a Learnable and Provably 1-Weakly Convex Regularizer for Denoising}\label{sec:Denoising}
In this section, we discuss the construction and training of $R$, and compare it with several other variational frameworks.
\subsection{Regularizer Architecture}\label{sec:Arch}
The weakly convex regularizer $R$ is chosen as the sum of convolutional ridges
\begin{equation}
\label{eq:convconvexridgereg}
    R\colon x[\cdot] \mapsto \sum_{i=1}^{N_C}\sum_{\vec k\in\R^2}\psi_i\bigl((h_i * x)[\vec k]\bigr),
\end{equation}
where $x[\cdot]$ represents a 2D image, $(h_i[\cdot])_{i=1}^{N_C}$ are the impulse responses of a collection of linear and shift-invariant filters, and $(\psi_i)_{i=1}^{N_C}$ are potential functions with Lipschitz continuous derivative.
In practice, the finite-size input images are zero-padded so that the outputs of the convolutions have the same spatial size as the input image.
We also choose potential functions with a shared profile $\psi$, so that $\psi_i= \alpha_i^{-2}\psi(\alpha_i\cdot)$ with $\alpha_i>0$. As the $h_i[\cdot]$ can absorb the $\alpha_i$ in the definition of $\psi_i$, this is just a different parameterization for adding weights $\alpha_i^{-2}$ in front of the profile $\psi$.
The advantage of our parameterization is that $\Lip(\psi_i')$ does not depend on $\alpha_i$, which will simplify the reasoning throughout this section. The number $N_C$ of filters is also referred to as the number of channels or feature maps of the model.
The motivations behind our choice are threefold.
\begin{itemize}
    \item {\textbf{Interpretability}:} The model \eqref{eq:convconvexridgereg} includes many traditional sparsity-promoting regularizers and, as shown, in Section~\ref{sec:UnderHood}, the trained regularizer will have a simple signal-processing interpretation.
    The parameters for deeper CNN-based regularizers are usually much harder to interpret than those in \eqref{eq:convconvexridgereg}.
    \item {\textbf{Control of $\rho$}:} The weak-convexity modulus of \eqref{eq:convconvexridgereg} can be upper-bounded using Proposition~\ref{pr:weakcvxconstantBound}.
    This is far less obvious for deeper CNNs architectures.
    There, weak convexity is usually promoted via regularization during training \cite{HurLec2022}.
    While this works qualitatively, it does not generate provably $\rho$-weakly convex maps for some prescribed~$\rho$.
    \item {\textbf{Model expressivity}:} There is evidence that, in constrained settings, \eqref{eq:convconvexridgereg} has good expressive power.
For instance, when learning convex regularizers for \eqref{eq:VarProb}, architectures of the form \eqref{eq:convconvexridgereg} are on par with deep CNNs such as the input convex NN (ICNN), all the while depending on much fewer parameters \cite{GouNeuBoh2022}.
\end{itemize}
To simplify the notation in the sequel, the regularizer \eqref{eq:convconvexridgereg} is written whenever needed in the generic form
\begin{equation}
\label{eq:convexridgereg}
    R\colon\vec{x}\mapsto \sum_{j=1}^{d \times N_C} \psi_j(\vec{w}_j^T \vec{x}),
\end{equation}
where $\vec x=(x[\vec k])_{\vec k\in \Omega} \in\R^d$ is the vectorized representation of $x[\cdot]$, the $\vec{w}_j\in\R^d$ correspond to shifted versions of the filter kernels, and $j$ indexes at the same time along the channels and the 2D shifts of the kernels.
The gradient of this differentiable regularizer $R$ reads
\begin{equation}
\label{eq:gradmodel}
    \boldsymbol{\nabla} R(\vec x) = \mat W^T \boldsymbol{\spline}(\mat W \vec x),
\end{equation}
where $\mat W=[\vec w_1 \cdots \vec w_{dN_C}]^T \in\R^{dN_C\times d}$ and $\boldsymbol{\spline}$ is the pointwise \textit{activation function} given by $\boldsymbol{\spline}(\vec z) = (\psi_j'(z_j))_{j=1}^{dN_C} = (\alpha_j^{-1}\psi'(\alpha_j z_j))_{j=1}^{dN_C}$.
Note that $\mat W \vec x$ is a multichannel filtered version of the image $\vec x$.
Since $\psi$ can absorb the spectral norm of $\mat W$, we enforce that $\|\mat W\| = 1$, where $\|\cdot\|$ denotes the spectral norm, in order to remove some redundancy from the model and simplify the explanations.

In the following, we use that the Lipschitz continuity of $\psi^\prime$ implies differentiability of $\psi^\prime$ almost everywhere (Rademacher's theorem) and that the essential infimum $\essinf_{t\in\R} \psi''(t)$ is well-defined and satisfies $\vert \essinf_{t\in\R} \psi''(t) \vert \leq \Lip(\psi')$.
\begin{lemma}
\label{lm:wcvx_1d}
    Let $\psi\colon \R \to \R$ have a Lipschitz continuous derivative.
    Then $\psi$ is $\rho$-weakly convex for any $\rho\geq s_{\inf} = \max(0, -\essinf_{t\in\R} \psi''(t))$.
\end{lemma}
\begin{proof}
    The Lipschitz continuity of $\psi'$ implies that $\psi'(t_2) - \psi'(t_1) = \int_{t_1}^{t_2}\psi''(t)\mathrm{d}t$ for any $t_1, t_2\in\R$.
    From this, we infer that $(\psi(t_2) - \psi(t_1))(t_2 - t_1) \geq (\essinf_{t\in\R} \psi''(t))(t_2 - t_1)^2$, which is precisely condition \eqref{eq:wcvx_monotone}.
\end{proof}
\begin{proposition}
\label{pr:weakcvxconstantBound}
Any $R$ of the form \eqref{eq:convexridgereg} with $\|\mat W\| = 1$ and a $\rho$-weakly convex $\psi$ is $\rho$-weakly convex.
In particular, assuming that $\psi'$ is Lipschitz continuous, this holds for any $\rho \geq s_{\inf}$ as defined in Lemma \ref{lm:wcvx_1d}.
\end{proposition}
\begin{proof}
Since $\alpha_i>0$ and $\psi_i = \alpha_i^{-2} \psi(\alpha_i \cdot)$, the convexity of $t\mapsto \psi(t) + \frac{\rho}{2}t^2$ implies the convexity  of $t \mapsto \psi_i(t) + \frac{\rho}{2}\alpha_i^{-2}(\alpha_i t)^2$.
Thus, $\vec x \mapsto \psi_j(\vec w_j^T \vec x) + \frac{\rho}{2}(\vec w_j^T \vec x)^2$ and $\vec x \mapsto R(\vec x) + \frac{\rho}{2}\|\mat W \vec x\|_2^2$ are also convex.
Since $\|\mat W \| = 1$ and $\rho>0$, $\vec x \mapsto \frac{\rho}{2}(\|\vec x\|_2^2 - \|\mat W \vec x\|_2^2)$ is convex, and we infer that $\vec x \mapsto R(\vec x) + \frac{\rho}{2} \|\vec x\|_2^2$ is convex.
\end{proof}
Hence, we can obtain a $1$-weakly regularizer $R$ by enforcing that $s_{\inf}\leq 1$.
\begin{remark}
\label{rk:connectioncrrnn}
    The ridge decomposition \eqref{eq:convexridgereg} of $R$ is also used within the CRR-NN framework \cite{GouNeuBoh2022}, which involves the learning of a convex-ridge regularizer $R$ with learnable spline potentials $\psi_j$. For CRR-NNs, $s_{\inf} = 0$ is enforced to ensure that the $\psi_j$ are convex. On the contrary, the present WCRR-NN model with $s_{\inf}\in [0, 1]$ has more freedom and therefore extends upon \cite{GouNeuBoh2022}.
\end{remark}
The present parameterization of $R$, is greatly inspired by \cite{GouNeuBoh2022}.
However, instead of its single (non-decreasing) spline non-linearity used in \cite{GouNeuBoh2022}, we decompose the activation $\spline = \psi^\prime$ into the difference of two splines as
\begin{equation}\label{eq:decomp_profile}
    \spline = \mu\spline_{+} - \spline_{-},
\end{equation}
where $\mu \in\R_{\geq 0}$ is a learnable parameter and the $\spline_{+}, \spline_{-}$ are trainable, non-decreasing, non-expansive linear splines.
Although theoretically equivalent to the use of a single linear spline $\spline$ with $s_{\inf}\leq 1$, we found the decomposition \eqref{eq:decomp_profile} to be more effective for the training.
Theoretical motivations for using splines in a constrained NN have been proposed in \cite{NGBU2022}, and a discussion of the expressivity of the resulting NN architecture can be found in \cite{goujon2022role}.
Our choice ensures that the following properties are met:
\begin{itemize}
    \item $R$ is $1$-weakly convex, which follows from Proposition~\ref{pr:weakcvxconstantBound};
    \item the Lipschitz constant is bounded as
    \begin{equation}\label{eq:estLip}
        \Lip(\boldsymbol{\nabla} R) \leq \|\mat W\|^2 \Lip(\varphi) \leq \max(\mu, 1).
    \end{equation}
    
\end{itemize}
In the following, we provide more parameterization details regarding the parameterization.
\paragraph{Parameterization of Learnable Linear Splines}
Both linear splines $\spline_{+}$ and $\spline_{-}$ are parameterized in the same way with our spline toolbox \cite{BCGA2020,DGB2022}.
In the sequel, we abbreviate their respective learnable parameters $\vec c_+$ and $\vec c_-$ by $\vec c$.
We use $\spline_{\vec c}\colon\mathbb{R}\rightarrow\mathbb{R}$ with knots $\tau_m = (m - M/2)\Delta$, $m=0,\ldots,M$, where $\Delta$ is the spacing.
For simplicity, we assume that $M$ is even.
The learnable parameter $\vec c=(c_m)_{m=0}^{M}\in\mathbb{R}^{M+1}$ defines the values $\spline_{\vec c}(\tau_m)=c_m$ of $\spline_{\vec c}$ at the knots.
To fully characterize $\spline_{\vec c}$, we extend it by the constant value $c_0$ on $(-\infty,\tau_0]$ and $c_M$ on $[\tau_M, +\infty)$.
Consequently, any primitive $\psi$ of $\spline_{\vec c}$ is piecewise quadratic on $[\tau_0, \tau_M]$ with affine extensions.
\paragraph{Constraints on the Linear Splines}
To ensure that the $\spline_i$ are non-decreasing and non-expansive, we follow the strategy introduced in \cite{GouNeuBoh2022}.
Let $\mat D\in \mathbb{R}^{M\times(M+1)}$ be the one-dimensional finite-difference matrix with $(\mat D \vec c)_m=(c_{m+1} - c_m)$ for $m=1, \ldots,M$.
As $\spline_{\vec c}$ is piecewise-linear, it holds that
\begin{equation}\label{eq:FeasSet}
    \text{$\spline_{\vec c}$ is non-decreasing and non-expansive} \Leftrightarrow 0\leq (\mat D \vec c)_m\leq \Delta, \; m=1,\ldots,M.
\end{equation}
To optimize over $\{\spline_{\vec c}\colon 0\leq (\mat D \vec c)_m\leq \Delta, \; m=1,\ldots,M\}$, we reparameterize the linear splines as $\spline_{\boldsymbol{P}_{\uparrow}(\vec c)}$, where
\begin{equation}
\label{eq:projection}
    \boldsymbol{P}_{\uparrow}(\vec c) =  \mat S \mathrm{Clip}_{[0, \Delta]}(\mat D\vec c) + \vec 1^T \vec c
\end{equation}
is a nonlinear projection onto the feasible set \eqref{eq:FeasSet}.
In \eqref{eq:projection}, $\mathrm{Clip}_{[0, \Delta]}$ is the pointwise clipping operation with $\mathrm{Clip}_{[0, \Delta]}(t)=\min(\max(0,t),\Delta)$, and $\mat S$ denotes the cumulative-sum operation with $(\mat S \vec d)_{m+1} = \sum_{k=1}^m d_k$ for $m=0, \ldots, M$ and any $\vec d \in\R^{m}$.
In words, $\boldsymbol{P}_{\uparrow}$ clips the finite differences between entries in $\vec c$ that are either greater than $\Delta$ or negative, and sets them to the closest admissible value, while it preserves the mean due to the additional term $\vec 1^T \vec c$.

Further, we enforce that $\spline_+$ and $\spline_-$ are odd, which is natural for imaging as it results in even potentials.
To get this symmetry while still satisfying \eqref{eq:FeasSet}, we use the change of variable $\vec c\rightarrow \frac{1}{2}(\boldsymbol{P}_{\uparrow}(\vec c) - \texttt{reverse}(\boldsymbol{P}_{\uparrow}(\vec c))$, where $\texttt{reverse}$ flips the order of the entries of $\vec c$.
Hence, all constraints are embedded into the parameterization, and the parameter $\vec c$ that is learned remains unconstrained.
\paragraph{Parameterization of Convolutional Filters}
The learnable convolution layer $\mat W$ is required to be of unit norm.
Hence, we parameterize $\mat W$ as $\mat W=\mat U/\|\mat U\|$, where $\mat U$ represents a convolutional layer with the same dimensions as $\mat W$.
The computation of the spectral norm $\|\mat U\|$ will be described in Section \ref{sec:TrainingProc}.
To efficiently explore a large field of view, see also \cite{GouNeuBoh2022}, we decompose $\mathbf{U}$ into a composition of three zero-padded convolutions with kernels of size $(k_s\times k_s)$, $k_s$ odd, and an increasing number of output channels.
Similarly to \cite{CheRan2014}, the convolution kernels are constrained to have zero mean.
The equivalent (up to boundary effects) \textit{single-convolution} layer would have a kernel of size $(K_s \times K_s)$ with $K_s=3k_s -2$.
%Since WCRNNs are shallow by design, this composition can be interpreted as a compensation for depth.

\subsection{Multi-Noise-Level Denoiser}
\label{subsec:multinoisedenoiser}
So far, we only introduced a generic $R$ that is not adapted to diverse noise levels.
To obtain a denoiser for various noise levels $\sigma$, a common approach is to incorporate an adjustable parameter $\lambda_{\sigma}\in\R$ as in
\begin{equation}\label{eq:denoiserlmbd}
    \hat{\vec x} = \argmin \limits_{\vec x \in \R^d} \frac{1}{2}\Vert \vec x - \vec y \Vert_2^2 + \lambda_{\sigma} R(\vec x).
\end{equation}
In principle, this leads to a noise-level-dependent regularizer $R_{\sigma} = \lambda_{\sigma}R$, but this dependence on $\sigma$ turns out to be too simple to ensure good performance across multiple noise levels.
Another limitation in this setting is that $R_{\sigma}$ is $\lambda_{\sigma}$-weakly convex.
Hence, for $\lambda_\sigma>1$, one is not guaranteed to remain within the CNC framework and, for $\lambda_\sigma<1$, we might not exploit the full freedom given by CNC models.
Therefore, we instead express the parameters $\alpha_i$\footnote{In preliminary investigations, we also attempted the learning of the parameter $\mu$ as a function of the noise level but it did not improve performance. Hence, $\mu$ is chosen to be constant across noise levels.} introduced in Section~\ref{sec:Arch} as functions of the noise level
\begin{align}
    \alpha_i(\sigma) &= \mathrm{e}^{s_{\alpha_i}(\sigma)}/(\sigma + \epsilon),\label{eq:para_sig}
\end{align}
where we set $\epsilon=\num{1e-5}$ to prevent instabilities for small $\sigma$.
Here, $s_{\alpha_i}$ is a learnable linear spline with underlying parameter $\vec c_{\alpha}^i$, which is parameterized similarly to $\varphi_{\pm}$ but without constraints.
The exponential parameterization in \eqref{eq:para_sig} allows for efficiently exploring a large range at training and such a scheme is quite common in learning, e.g. in\ the popular TNRD framework \cite{chen2016trainable}. The scaling by $\sigma$ in \eqref{eq:para_sig} allows for normalizing the noise distribution before the activation and was found to be very helpful in practice.
Ultimately, our noise-level-dependent profile functions \;${\psi_i}(t, \sigma) = (1/\alpha_i(\sigma))^2 \psi(\alpha_i(\sigma) t)$ satisfy
\begin{equation}
    \frac{\partial^2 \psi_i}{\partial t^2}(t, \sigma) = \mu \varphi_{+}'(\alpha_i(\sigma) t) - \varphi_{-}'(\alpha_i(\sigma) t) \in [-1, +\infty).
\end{equation}
%\begin{remark}
%    In the single activation setting, $\alpha_i$ is shared by the two activations $\varphi_{i,j}$ but not across the channels. This guarantees that the activation used across channels are rescaled versions one of another.
%\end{remark}
\begin{remark}
\label{rk:weakboundscaling}
    The bound on the weak-convexity modulus given in Proposition \ref{pr:weakcvxconstantBound} does not depend on the parameters $\alpha_i$. Consequently, the addition of the $\alpha_i$ as learnable parameters does not compromise the weak-convexity guarantees on $R$. 
\end{remark}
In the remainder of the paper, $\boldsymbol{\theta}$ represents the aggregated set of learnable parameters (as detailed in Section \ref{sec:TrainingProc}) and we use the notation $R_{\boldsymbol{\theta}}$ whenever an explicit reference to the parameters is needed.
Likewise, with a slight abuse of notation, we use $R_{\boldsymbol{\theta}(\sigma)}$ to denote the regularizer at noise level $\sigma$.
This noise-dependent regularizer $R_{\param(\sigma)}$ then yields the proximal denoiser
\begin{equation}\label{eq:denoiser}
D_{\param(\sigma)}(\vec y) = \prox_{R_{\param(\sigma)}}(\vec y) = \argmin\limits_{\vec{x}\in\R^d} \frac{1}{2}\|\vec{x} - \vec{y}\|_2^2 + R_{\param(\sigma)}(\vec{x}).
\end{equation}
In general, $D_{\param(\sigma)}$ does not have a closed-form expression but, due to the convexity and smoothness of the underlying objective, $D_{\param(\sigma)}(\vec y)$ can be computed efficiently with gradient-based solvers.
In practice, we use AGD \cite{Nesterov1983} combined with the standard gradient-based restart technique introduced in \cite{ODonoghue2015}.
The stepsize is chosen as $1/(1 + \max(1, \mu))$, which ensures convergence to a global minimizer as a consequence of the Lipschitz bound \eqref{eq:estLip}.
\subsection{Training Procedure}\label{sec:TrainingProc}
In this section, we detail how the parameters $\boldsymbol{\theta}$ are learned so that $D_{\param(\sigma)}(\vec y)$ is a good Gaussian denoiser across multiple noise levels.
\subsubsection{Training Problem}Let $\{\vec x^m\}_{m=1}^M$ be a set of clean images.
Each image $\vec x^m$ is corrupted as $\vec y^m = \vec x^m + \sigma^m \vec n^m$ with Gaussian noise $\vec n^m\sim \mathcal N(\vec 0,\mat I)$ and a noise level $\sigma^m \sim \mathcal U[0,\sigma_{\max}]$.
Then, we define the following multi-noise-level training problem
\begin{equation}\label{eq:TrainProb}
    \hat{\param}\in \argmin\limits_{\param}\sum_{m=1}^M \mathbb E_{(\vec n^m, \sigma^m)} \bigl(\Vert D_{\param(\sigma^m)}(\vec y^m) - \vec x^m\Vert_1\bigr).
\end{equation}
Here, the $\ell_1$ loss is chosen because it is known to be robust and well-performing for the training of CNNs \cite{LossZaho2017, KKHP2017}.
\subsubsection{Optimization}
\label{subsec:optimisation}
For clarity, we briefly recall the various parameters contained in $\boldsymbol{\theta}$ before outlining the actual optimization procedure.
\paragraph{Profile-Related Parameters}
The linear splines $\varphi_+$ and $\varphi_-$ are parameterized by $\vec c_+$ and $\vec c_-$ via the constrained coefficients $\tilde{\vec c}_\pm = \frac{1}{2}(\boldsymbol{P}_{\uparrow}(\vec c_\pm)-\texttt{reverse}(\boldsymbol{P}_{\uparrow}(\vec c_\pm)))$ so that they are odd, non-decreasing, and non-expansive. Together with $\mu>0$, \eqref{eq:decomp_profile} then leads to the linear-spline activation function $\varphi$.
Recall that its primitive defines the profile $\psi$.
The parameters $\vec c_{\alpha}^i$ specify the linear-spline functions $\alpha_i(\sigma)$, which rescale the profile $\psi$ across the channels in \eqref{eq:convexridgereg} and across the noise levels.
\paragraph{Spectral Normalization}
The convolution operation represented by $\mat W$ is parameterized as $\mat W=\mat U/\|\mat U\|$, $\mat U$ consisting in the composition of $3$ zero-padded convolutions.
Here, $\|\mat U\|$ is computed as follows.
\begin{itemize}
    \item {\textbf{Training stage}:} Assuming circular boundary conditions instead of zero-padding, we can consider $\mat U$ as a \textit{single-convolution} layer.
    Then, $\mat U^T \mat U$ encodes a $2$D convolution from a one-channel input to a one-channel output.
    Hence, it can be represented by a kernel $\mat K_{\mat U^T \mat U} \in \R^{(2K_s -1)\times (2K_s-1)}$.
    In this setting, the spectrum of $\mat U^T \mat U$ can be computed using the $2$D discrete Fourier transform (DFT) \cite{sedghi2018the} as
    \begin{equation}
    \label{eq:convspectrum}
    \mathrm{spec}(\mat U^T \mat U) = \bigl\{|\mathrm{DFT}(\mathrm{Pad}_{\sqrt{d}}(K_{\mat U^T \mat U}))_{k_1 k_2}|\colon 1\leq k_1,k_2\leq \sqrt{d}\bigr\},
    \end{equation}
    where $\mathrm{Pad}_{\sqrt{d}}$ zero-pads $\mat K_{\mat U^T \mat U}$ into a $(\sqrt{d}\times \sqrt{d})$ image.
    We rely on \eqref{eq:convspectrum} to estimate $\|\mat U\|\simeq \max (\mathrm{spec}(\mat U^T \mat U))$ during training since it is efficient to compute and can be incorporated into the computational graph of the computation of $\boldsymbol{\nabla}R$.
    \item {\textbf{Test stage}:} Subsequently, when evaluating a trained WCRR-NN model, the true $\|\mat U\|$ is computed with high precision using the power method (1000 steps).
This firm normalization guarantees the $1$-weak convexity of the underlying $R$ (up to numerical imprecision).
\end{itemize}
\paragraph{Implicit-Differentiation}
The learning of the proximal denoiser comes with the challenge that $D_{\param(\sigma)}$ depends implicitly on $\boldsymbol{\theta}$.
As shown in the deep-equilibrium (DEQ) framework \cite{BKK2019}, it is possible to compute the Jacobian $J_{\param} D_{\param(\sigma)}$ of the denoiser with respect to the parameters via implicit differentiation.
For this purpose, two steps are required.
\begin{itemize}
    \item {\textbf{Image denoising}:} First, given a noisy input $\vec y^m$, one needs to perform the forward pass, which consists in the computation of $\hat{\vec x} = D_{\param(\sigma^m)}(\vec y^m)$.
    The deployed AGD is run until the relative change of norm between consecutive iterates is lower than $\num{e-4}$.
\item {\textbf{Gradient computation}:} We use the DEQ implementation introduced in \cite{BKK2019} and now briefly discuss the general concept within our setting.
The differentiability of $R$ implies that the denoised images satisfy
\begin{equation}\label{eq:optimality}
   \hat{\vec x}(\boldsymbol{\theta}) - \vec y + \boldsymbol{\nabla}_{\vec x} R(\boldsymbol{\theta}, \hat{\vec x}(\boldsymbol{\theta})) = \vec 0,
\end{equation}
where the dependence on $\sigma$ is dropped for clarity and the dependence on $\boldsymbol{\theta}$ is made explicit.
The application of the implicit-function theorem for \eqref{eq:optimality} leads to
\begin{equation}
   (\mat I + \mat H_R(\boldsymbol{\theta}, \hat{\vec x}(\boldsymbol{\theta}))) J_{\param} \hat{\vec x}(\boldsymbol{\theta})  = J_{\param} (\boldsymbol{\nabla}_{\vec x} R) (\boldsymbol{\theta}, \hat{\vec x}(\boldsymbol{\theta})).
\end{equation}
Hence, we evaluate the matrix-vector products with $(J_{\param} D_{\param}(\vec y))^T = (J_{\param} \hat{\vec x}(\boldsymbol{\theta}))^T$ (which are required for computing the gradients of \eqref{eq:TrainProb} within the backpropagation algorithm) by solving a simple linear system.
This is carried out with the Anderson routine given in \cite{BKK2019}.
While deriving $J_{\param} (\boldsymbol{\nabla}_{\vec x} R)$ is cumbersome and usually left to automatic differentiation, we use the explicit expression
\begin{equation}
    \mat H_R(\hat{\vec x})\vec u = \mat W^T \bigl(\boldsymbol{\spline}'(\mat W {\hat{\vec x}}) \odot (\mat W {\vec u})\bigr),
\end{equation}
where $\odot$ is the Hadamard product and the piecewise-constant function $\boldsymbol{\spline}'$ is analytically derived from the B-spline representation of the linear spline $\spline$.
This yields the same results as automatic differentiation, but was found to be more efficient.
\end{itemize} 

\paragraph{Optimization}
The non-convex training problem in \eqref{eq:TrainProb} is solved with the stochastic Adam optimizer \cite{KinJim2015}, where we sample for each batch the $\vec x^m$, the corresponding noise-level $\sigma^m$, and the noise $\vec n^m$.
Note that, within each batch, images are corrupted with different noise levels and, likewise, in different epochs, different noise levels can be applied to the same $\vec x^m$.

\subsection{Training and Denoising Performance}\label{sec:WeakConvReg}
The proposed weakly convex regularizer is learned over the Gaussian-denoising task described in Section~\ref{sec:TrainingProc}, with $\sigma_{\max} = 30/255$.
The same procedure as in \cite{bohra2021learning} is used to form 238,400 grayscale patches\footnote{WCRR-NNs are fully convolutional and can process input of any spatial size.} of size $(40\times 40)$ from 400 images of the BSD500 data set \cite{arbelaez_contour_2011}, while 12 other images are kept for validation.
In accordance with the ablation study reported in Tables \ref{table:denoising_performance_channels} and \ref{table:denoising_performance_kernels}, the three filters in $\mathbf{U}$ have kernels with $k_s = 5$ and $4$, $8$, and $60$ output channels, respectively.
The linear splines $\spline_i$ have $M+1=101$ equally distant knots with $\Delta = \num{2e-3}$.
We initially set $\vec c_+ =\vec 0$ and $(\vec c_-)_m=\tau_m$, which was found to be important to help the training. Intuitively, this choice helps the regularizer to use weak convexity, which is only permitted through $\vec c_{-}$.
The linear splines parameterizing $\alpha_i(\sigma)$ have $11$ equally distant knots in the range $[0,\sigma_{\max}]$ and are initialized with the constant value $5$.
Our model is trained with the Adam optimizer for $6000$ steps with batches of size $128$, which takes less than 2 hours on a Tesla V100 GPU.
The learning rates are initially set to $\num{5e-2}$ for $\mu$, $\num{5e-3}$ for $\mat U$, and $\vec c_{\alpha}^i$, and to $\num{5e-4}$ for $\vec c_+$ and $\vec c_-$.
Then, they are decayed by 0.75 every $500$ batches.
For evaluation, the denoising \eqref{eq:denoiser} is performed with AGD and a tolerance of $\num{e-4}$ for the relative change of norm between consecutive iterates.
An example of convergence curves is provided in Figure \ref{fig:convergencedenoising}.

\begin{figure}[!t]
    \centering
    \includegraphics[width=0.9\textwidth]{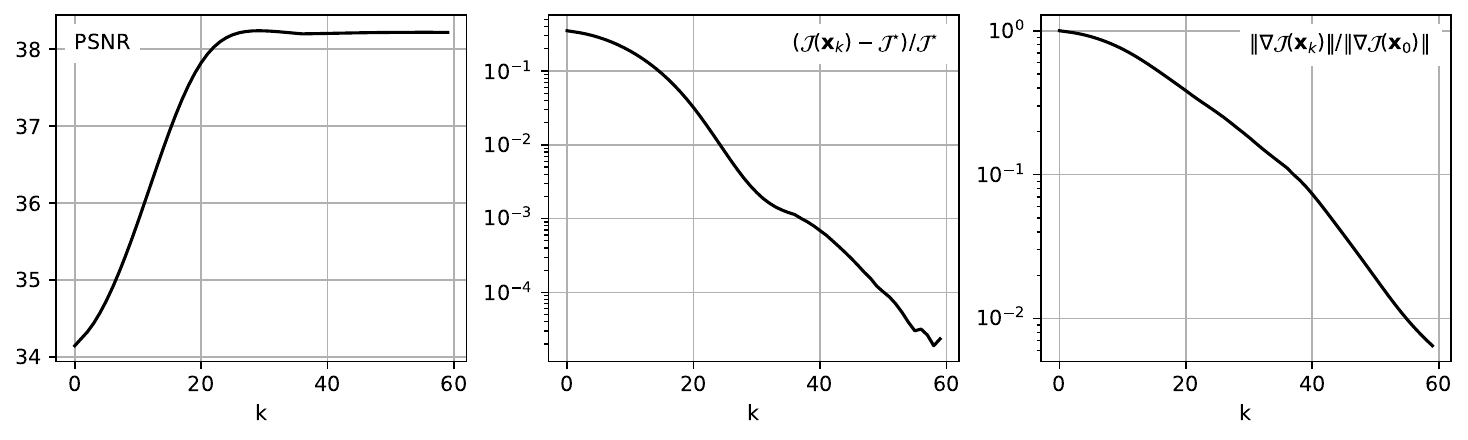}
    \caption{Example of convergence curves for denoising with the WCRR-NN and AGD.}
    \label{fig:convergencedenoising}
\end{figure}

\label{subsec:denoisingcomparison}
The numerical evaluation of our WCRR-NNs and several other methods on the BSD68 test set is provided in Table~\ref{table:denoising_performance}.
The task is non-blind, in the sense that the noise level is used either directly as an input (as in\ BM3D) or indirectly via a regularization parameter that is tuned on a corresponding validation set (as in\ TV).
The first important observation is that WCRR-NNs, which implement a convex energy, outperform the popular BM3D denoiser~\cite{DabFoiKat2007}.
To the best of our knowledge, this is the first time a (learnable) convex model surpasses BM3D.
A visual comparison of BM3D and (W)CRR-NNs is provided in Figure~\ref{fig:DenoiseCompare}.
The results obtained with the 2nd and 6th methods in Table~\ref{table:denoising_performance} on the same image can be found in the original paper \cite{CheRan2014}.
Next, we discuss in more depth the  frameworks from Table~\ref{table:denoising_performance} that are close in spirit to WCRR-NNs.
\begin{table}[t]
\caption{Denoising performance on the BSD68 test set.}
\label{table:denoising_performance}
\setlength\tabcolsep{2pt}
\centering
\begin{tabular}{llccc}
\toprule
 & & $\sigma=5/255$ & $\sigma=15/255$ & $\sigma=25/255$ \\
\midrule
\multirow{3}{*}{Convex} & TV$^{1,2,3}$\cite{rudin1992nonlinear} & 36.41 & 29.90 & 27.48 \\
& Higher-order MRFs convex$^{1,2,3}$ \cite{CheRan2014} & - & 30.45 & 28.04\\
& CRR-NN$^{1,2,3}$\cite{GouNeuBoh2022} & 36.96 & 30.55 & 28.11\\
\midrule
\multirow{2}{*}{Provably CNC} & TV CNC$^{1,2}$ & 36.53 & 29.92 & 27.49 \\
& WCRR-NN$^{1,2}$ & 37.68 & 31.22 & 28.69\\
\midrule
Approx.\ CNC & Prox-DRUNet & 37.98 & 31.70 & 29.18 \\
\midrule
\multirow{2}{*}{Others} & Higher-order MRFs$^{1}$ \cite{CheRan2014} & - & 31.22 &  28.70\\
& BM3D \cite{DabFoiKat2007} & 37.54 & 31.11 & 28.60\\
\bottomrule
\multicolumn{4}{l}{\footnotesize $^1$Ridge-based regularizer, \footnotesize $^2$Minimization of convex functional, \footnotesize $^3$Convex regularizer}\\
\end{tabular}
\end{table}
\begin{figure}[t]
    \centering\includegraphics[width=\linewidth]{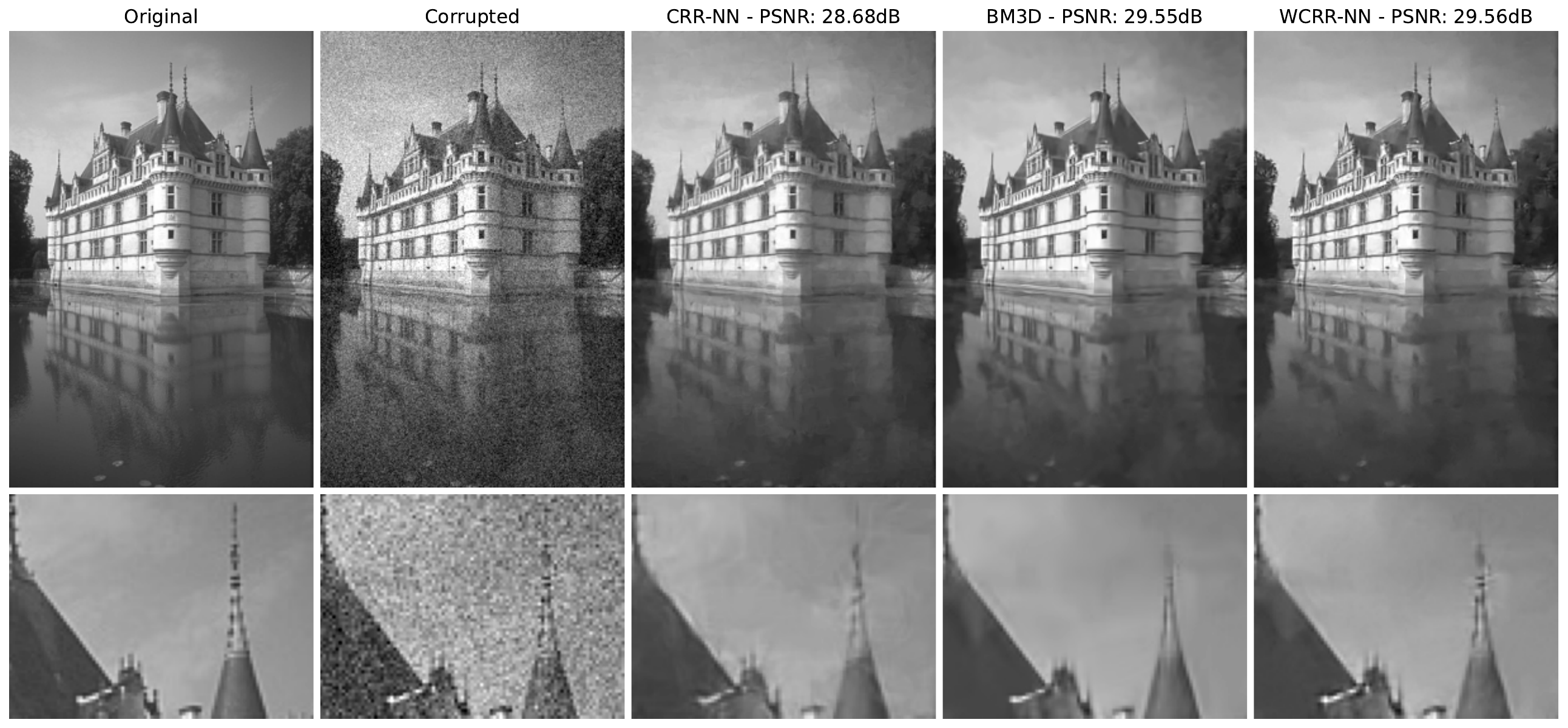}
    \caption{Denoising of the ``castle" image from the BSD68 test set for noise level $\sigma=25$.}
    \label{fig:DenoiseCompare}
\end{figure}

\begin{table}[t]
\parbox{.57\linewidth}{\caption{WCRR-NN: PSNR on BSD68 vs number of filters.}
\label{table:denoising_performance_channels}
\setlength\tabcolsep{3pt}
\centering
\begin{tabular}{lccccc}
\toprule
 $N_c$ & 10 & 20 & 40 & 60 & 80 \\
 \midrule
 $\sigma=5/255$ & 37.43 & 37.59 & 37.63 & 37.66 & 37.66 \\
 $\sigma=15/255$ & 30.85 & 31.15 & 31.19 & 31.21 & 31.20 \\
 $\sigma=25/255$ & 28.16 & 28.62 & 28.67 & 28.68 & 28.69\\
\bottomrule
\end{tabular}}
\hfill
%\begin{tabular}{lccc}
%\toprule
% \#channels & $\sigma=5/255$ & $\sigma=15/255$ & $\sigma=25/255$ \\
%\midrule
%10 & 37.43 & 30.85 & 28.16 \\
%20 & 37.59 & 31.15 & 28.62 \\
%40 & 37.63 & 31.19 & 28.67 \\
%60 & 37.66 & 31.21 & 28.68 \\
%80 & 37.66 & 31.20 & 28.69 \\
%\bottomrule
%\end{tabular}
\parbox{.42\linewidth}{\caption{WCRR-NN: PSNR on BSD68 vs kernel size.}
\label{table:denoising_performance_kernels}
\centering
\setlength\tabcolsep{3pt}
\begin{tabular}{lccc}
\toprule
 $k_s$ & 3 & 5 & 7 \\
 $K_s$ & 5 & 13 & 19 \\
 \midrule
 $\sigma=5/255$ & 37.64 & 37.66 & 37.65 \\
 $\sigma=15/255$ & 31.14 & 31.21 & 31.21 \\
 $\sigma=25/255$ & 28.56 & 28.68 & 28.68\\
\bottomrule
\end{tabular}}
\end{table}

\paragraph{CNC-Based Total Variation}
The WCRR-NN model is inspired by earlier works that extend TV denoising to the CNC framework using non-convex potential functions \cite{ZSZLLD2019, ScrChoPes2022}. 
The publicly available implementations outperform TV for specific classes of images, typically for cartoon-like ones with sharp edges.
However, we did not observe any significant improvements for the denoising of the natural images in BSD68.
Hence, in our comparison, we used our own version of CNC-TV, which was obtained by training a WCRR-NN with two fixed filters, namely, the horizontal and vertical finite differences.
This corresponds to an anisotropic TV denoising with learned profiles---the CNC counterpart of the standard anisotropic TV denoising model.
As reported in Table~\ref{table:denoising_performance}, this only yields marginal improvements over TV.
\paragraph{Field of Experts and Higher-Order MRFs}
The FoE approach corresponds to learning the filters associated with a regularizer of the form \eqref{eq:convexridgereg} with hand-picked profile functions \cite{RotBla2009}.
It was successfully applied in \cite{CheRan2014}, with both convex and non-convex profiles.
A key difference with WCRNNs lies in the theoretical guarantees: The non-convex profiles are unconstrained in \cite{RotBla2009, CheRan2014}.
Hence, the objective function is not provably convex.
This means that the optimization is delicate and the convergence to a global optimum cannot be guaranteed.
Interestingly, WCRR-NN offer the same performance as in \cite{CheRan2014} while minimizing a convex energy.

\paragraph{CRR-NNs}
Our work extends upon the convex regularizers learned with CRR-NNs~\cite{GouNeuBoh2022}.
The substitution of weak convexity for convexity makes a significant difference as it yields a gain of at least 0.6dB for all noise levels (see Table~\ref{table:denoising_performance}).
In contrast with the simpler 2-filter TV setting, the improvement is substantial.
This indicates that the learning of sufficiently many filters is necessary to fully exploit the additional freedom provided by weak convexity.

\paragraph{Gradient-Step Denoisers}
Proposition~\ref{prop:ProxCNC} allows one to (implicitly) construct non-convex regularizers $R$ by learning their proximal operator.
This result is exploited in \cite{HurLec2022}, where $\prox_{R}=\boldsymbol{\nabla} \psi$ is parameterized through the potential $\psi= \frac{1}{2}\Vert \cdot \Vert^2 + g$, where $\boldsymbol{\nabla} g$ must be contractive.
To leverage the power of deep-learning, the authors choose $g=\frac{1}{2}\|\cdot - \mathrm{DRUNet}(\cdot)\|_2^2$, where DRUNet \cite{Drunet2022} is a deep CNN with $\sim17$ million parameters.
As there are currently no efficient methods to globally bound the Lipschitz constant of the gradient of a deep CNN, they propose to instead regularize the norm of the Jacobian of $\boldsymbol{\nabla} g$ at finitely many locations during training.
This yields the Prox-DRUNet denoiser, which performs very well in practice (see Table \ref{table:denoising_performance}\footnote{The Prox-DRUNet denoiser given in \cite{HurLec2022} is trained on color images.
For grayscale denoising, we plug the image into all three color channels and average the output across the channels, and tune the denoising strength parameter $\sigma$ to optimize performance.
As expected, the obtained metrics are on par with DnCNN for $\sigma=25$ and with the gradient-step denoiser for $\sigma=5$ in \cite{hurault2022gradient}, which indicates the appropriateness of the usage.}). Note that Prox-DRUNet only approximately satisfies the conditions to be a truly CNC method because $\|\mat H_g(\vec x)\|$ can be greater than 1 for some $\vec x$, meaning that $\boldsymbol{\nabla}g$ is not contractive (as already reported in \cite{HurLec2022}).
On noisy BSD68 images, we found\footnote{We computed $\|\mat H_g\|$ with a precise power method (300 iterations).} that $\|\mat H_g\|$ can be as large as 1.07 ($\sigma=5$), 1.08 ($\sigma=15$), 1.18 ($\sigma=25$), and on a set of 68 random images (i.i.d. uniformly distributed pixels in $[0, 1]$) as large as 1.69 ($\sigma=5$), 1.20 ($\sigma=15$), 1.43 ($\sigma=25$).
%Similar findings are reported in \cite{HurLec2022}.
Overall, we believe that WCRR-NNs and Prox-DRUNet offer a very complementary perspective. In fact, the good performance of Prox-DRUNet suggests that there could even be some room for further improvements with provably CNC methods.

\subsection{Interpretation as Sparsity Prior}\label{sec:UnderHood}
The filters and profile functions learned for our WCRR-NNs are shown in Figures~\ref{fig:Filters} and~\ref{fig:Profiles}, respectively.
\paragraph{Filters}
The impulse responses of the filters in $\mat W$ present patterns akin to wavelets and Gabor filters, in that they come in various modulations, orientations, and scales.
In addition, the kernel $\mat K$ corresponding to the convolution $\mat W^T \mat W$ is very close to the $2$D discrete Kronecker impulse, meaning that $\mat W$ is almost a Parseval frame ($\mat W^T \mat W\simeq \mat I$).
A key difference, however, is that $\mat K$ is zero-mean.
We also observed that more filters than in the convex setting of CRR-NNs are needed to reach the maximal performance.
The payoff is that the filters are now able to capture more complicated patterns.
\paragraph{Profile Functions}
The learned profiles $\psi_i$ are shared among the filters and then individually rescaled with the $\alpha_i$, so that the $\psi_i$ have the same shape. 
Hence, only their prototype $\psi$ is discussed here.
The latter converges to a quasi-convex function (i.e., sub-level sets are intervals) even without us explicitly imposing this constraint. 
Moreover, $\psi$ fully exploits the 1-weak convexity of the regularizer $R$ in the sense that $\min_{t} \psi^{\prime \prime} (t)= -1$.
Hence, this is an active constraint since $R$ would not satisfy it by default. Overall, $\psi$ closely resembles the minimax concave penalty function \cite{LanMorSel2019}.
%As observed with CRR-NNs \cite{GouNeuBoh2022}, the derivative $\spline=\psi^{\prime}$ is very steep around zero, which makes $\psi$ almost non-differentiable at zero.

To extend our model, we also experimented with learning a different $\psi_i$ for each filter.
This led to less interpretable profiles (not necessarily quasi-convex and with some oscillations), while it only offered a negligible gain in performance: less than 0.05dB on the denoising experiment for noise levels $\sigma\in\{5/255,15/255,25/255\}$.
\begin{figure}[tbp]
    \parbox{.61\linewidth}{\centering{\includegraphics[width=\linewidth]{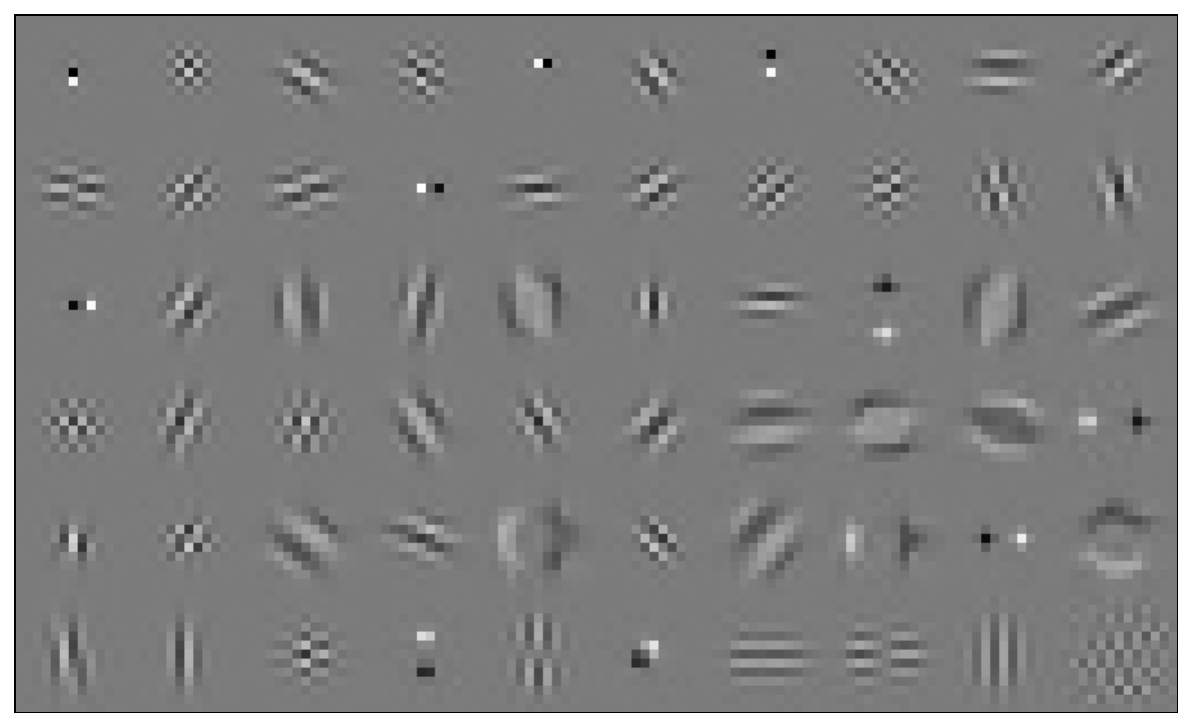}}
    \caption{Impulse response of the filters in the learned WCRR-NN.}
    \label{fig:Filters}}
    {\parbox{.38\linewidth}{\centering{\includegraphics[width=0.98\linewidth]{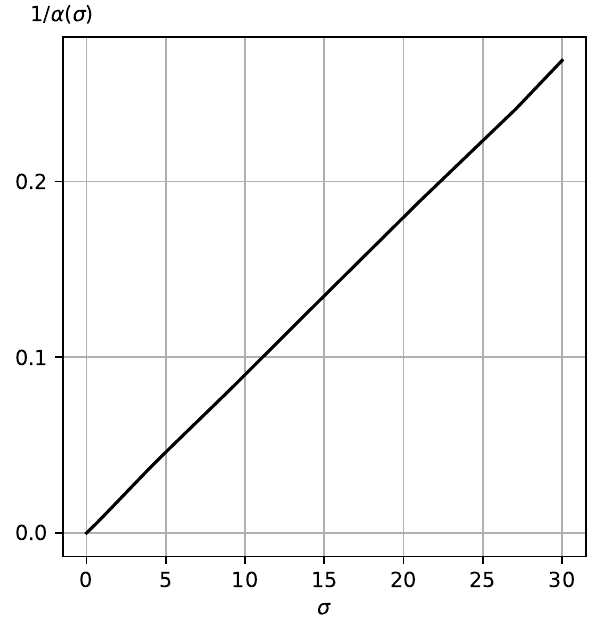}}
    \caption{Plot of $1/\alpha(\sigma)$ vs $\sigma$, where $\alpha(\sigma)=\sum_{i=1}^{N_C} \alpha_i(\sigma)$ encodes the average behavior across the channels.}
    \label{fig:Scaling}}}
    \vspace{.3cm}

    \centering{\includegraphics[width=145mm]{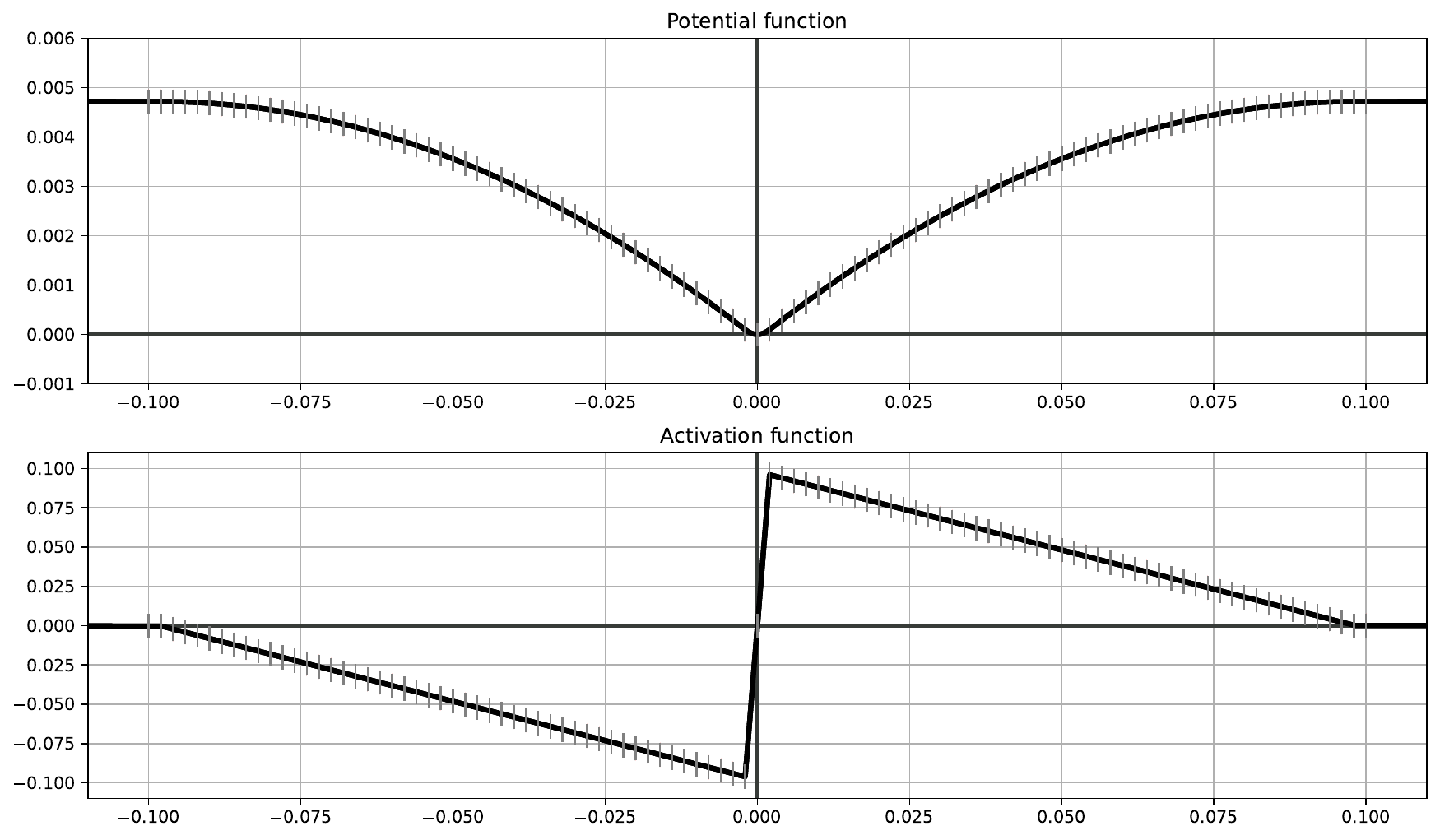}}
    \caption{Potential function $\psi$ and activation function $\varphi=\psi'$ of the learned WCRR-NN.
    These functions are splines of degrees 2 and 1, respectively.
    The vertical markers indicate the control points of the splines.}
    \label{fig:Profiles}
\end{figure}

\paragraph{Noise-Dependent Scaling} The only part of $R$ that depends on the noise level $\sigma$ is the profiles $\psi_i$, which depend on $\sigma$ through the $\alpha_i(\sigma)$.
As can be seen in Figure \ref{fig:Scaling},
the $1/\alpha_i$ are on average linear functions of $\sigma$.
Loosely speaking, most profiles will roughly have the form $\psi_i(t, \sigma) \simeq \sigma^2\psi(t/\sigma)$.
To verify that such a simple dependence of $R$ on $\sigma$ is sufficient, 3 WCRR-NNs were trained to denoise at a single noise level ($\sigma\in\{5/255,15/255,25/255\}$).
As these models do not outperform the multi-noise-level WCRR-NN on BSD68, the simple rescaling of the profiles appears to suffice.
\paragraph{Signal-Processing Perspective}
The regularizer $R$ is trained to promote natural images. The corresponding gradient-descent step\footnote{In our setting with no biases and where $\spline$ has a maximum slope at the origin, it can be shown that $\mathrm{Lip}(\mat W^T \boldsymbol{\spline}(\mat W \cdot)) = \|\spline'\|_{\infty}$.} $\vec x \mapsto \vec x - \boldsymbol{\nabla}R(\vec x)/\mathrm{Lip}(\boldsymbol{\nabla}R) = \vec x - \mat W^T \boldsymbol{\spline}(\mat W \vec x)/\|\varphi'\|_{\infty}$, which should increase the regularity of images, is therefore expected to remove features considered as noise in natural images.
In turn, we then expect that $\vec x \mapsto \mat W^T \boldsymbol{\spline}(\mat W \vec x)/\|\varphi'\|_{\infty}$ extracts some noise.
Due to its shape, see Figure \ref{fig:Profiles}, the function $\spline/\|\spline'\|_{\infty}$ preserves the small responses $\mat W \vec x$ to the filters (it is almost the identity for small inputs), and cuts the large ones (it is almost the zero function for large inputs).
Hence, one reconstructs the estimated noise $\mat W^T \boldsymbol{\spline}(\mat W \vec x)/\|\varphi'\|_{\infty}$ by essentially removing the components of $\vec x$ that exhibit a significant correlation with the kernels.
This allows for a more efficient noise extraction than done by the monotonic clipping function learned in the convex regularization framework of CRR-NNs, see \cite[Figs.~5 and 6]{GouNeuBoh2022}.
While the monotonic clipping also preserves the small inputs, it is unable to fully remove the large responses because of the monotonicity constraint stemming from the convexity of the underlying potential. 

In addition to the above perspective, we can make a link with wavelet- or framelet-like denoising \cite{Donoho1995,LangGOBW1996,ChBiVet2000,BluLui2007,ParekhASW2015}.
Indeed, given that $\mat W^T\mat W \simeq \mat I$, the gradient-descent step can be approximated as $\vec x \mapsto \vec x - \mat W^T \boldsymbol{\spline}(\mat W \vec x)/\|\varphi'\|_{\infty} \simeq \mat W^T \phi(\mat W \vec x)$ with $\phi = \mathrm{Id} - \spline/\|\spline'\|_{\infty}$.
Since $\phi$ is zero around the origin and is the identity for sufficiently large inputs, it qualitatively stands between the soft- and hard-thresholding functions that have been key components for wavelet and framelet denoising for years.
Finally, note that framelet-denoising models are themselves closely related to proximal operators \cite{HHNP19}.

\section{Extension to Generic Inverse Problems}
\label{sec:InverseProblems}
We now use the regularizer $R_{\param(\sigma)}$ trained in Section~\ref{sec:Denoising} to solve inverse problems based on the variational formulation \eqref{eq:VarProb}.
Here, the key challenge is the possible non-convexity of the objective, which prevents us from minimizing \eqref{eq:VarProb} globally.
It is, however, possible to search for critical points.
These are still of particular interest, especially because the regularizer has a simple structure with an \emph{almost} convex energy landscape.

\paragraph{Proximal vs Gradient Methods}
The standard PnP frameworks rely on proximal-based methods with an explicit denoising step. The motivation there is that the denoising step is typically efficient to perform while neither the regularizer (if it exists) is explicitly known, nor its gradient. In our setting, on the contrary, it is very efficient to evaluate the regularizer and its gradient, and hence AGD methods \cite{Nesterov1983}, which are applicable to general non-convex problems \cite{GhaLan2016}, are better suited. In our setting, AGD is also known to attain optimal convergence rates among first-order methods.
In the sequel, we recall the main features of AGD and show how to leverage the knowledge of the weak-convexity modulus of the objective.

% In our framework is also possible, but we found this to be less efficient in practice.
\subsection{Accelerated Gradient Descent}
To solve the inverse problem, we minimize the regularized objective
\begin{equation}\label{eq:InvProbLearn}
    \mathcal J(\vec x) = \frac{1}{2}\Vert \vec H\vec x - \vec y \Vert_2^2 + \lambda R_{\param(\sigma)}(\vec x),
\end{equation}
where $R_{\param(\sigma)}$ is the $1$-weakly convex regularizer from Section~\ref{sec:WeakConvReg} and $\lambda>0$ is a regularization parameter.
Since the objective is differentiable, we can rely on gradient-based methods to find critical points of $\mathcal J$ as a convenient alternative to proximal algorithms.
To reduce the reconstruction time, we propose an AGD variant in Algorithm~\ref{alg:AGD}, which is tailored to \mbox{$\lambda$-weakly} convex functionals $\mathcal J$ with $L$-Lipschitz-continuous gradient.

From \eqref{eq:estLip}, we infer that $\boldsymbol{\nabla} \mathcal J$ is $L$-Lipschitz-continuous with $L\leq \Vert\mat  H \Vert^2 + \lambda \max(\mu,1)$, which implies for $\vec x_1, \vec x_2 \in \R^d$ the standard upper estimate
\begin{align}
    \mathcal J (\vec x_1) \leq \mathcal J (\vec x_2) + \boldsymbol{\nabla} \mathcal J (\vec x_2)^T (\vec x_1 - \vec x_2) + \frac{L}{2} \Vert \vec x_1 - \vec x_2 \Vert^2\label{eq:QuadEst}.
\end{align}
As $R_{\param(\sigma)}$ is 1-weakly convex, $\mathcal J$ is $\lambda$-weakly convex.
Hence, the subgradient inequality for convex functions leads for $\vec x_1, \vec x_2 \in \R^d$ to the lower estimate
\begin{align}
    \mathcal J (\vec x_1) \geq \mathcal J (\vec x_2) + \boldsymbol{\nabla} \mathcal J (\vec x_2)^T (\vec x_1 - \vec x_2) - \frac{\lambda}{2} \Vert \vec x_1 - \vec x_2\label{eq:WeakConvEst}\Vert^2.
\end{align}
\begin{algorithm}[t]
	\begin{algorithmic}
	    \State \textbf{Input:} initialization $\vec \vec x_{0} \in \R^d$, tolerance $\epsilon>0$, $a>1$
		\State Set $t_0 = t_{1} = 1$, $k=1$, $\vec z_0 = \vec x_0$, $\vec x_1 =  \vec x_0$
		\While{$\Vert \vec x_k - \vec x_{k-1}\Vert/\Vert \vec x_{k-1}\Vert > \epsilon$ or $k=1$}
			\State $\vec z_{k} = \vec x_k + \frac{t_{k-1} - 1}{t_k} (\vec x_k - \vec x_{k-1})$
            \State $\text{crit} = \boldsymbol{\nabla} \mathcal J (\vec z_{k})^T(\vec z_{k} - \vec z_{k-1}) + \frac{a\lambda}{2} \Vert \vec z_{k} - \vec z_{k-1} \Vert^2$
			\If{$\text{crit} > 0$}
			    \State $\vec z_{k} = \vec x_k$
                \State $t_k=1$
			\EndIf
			\State $\vec x_{k+1} = \vec z_k - \frac{1}{L} \boldsymbol{\nabla} \mathcal J(\vec z_k)$
			\State $t_{k+1} = \frac{1+\sqrt{1+4 t_k^2}}{2}$
            \State $k \gets k+1$
		\EndWhile
		\State \textbf{Output:} Approximate solution $\vec x_k$
		\caption{Safeguarded AGD for $\lambda$-weakly convex $\mathcal J$ with $L$-Lipschitz gradient}
		\label{alg:AGD}
	\end{algorithmic}
\end{algorithm}
Given some initialization $\vec x_0 = \vec x_{-1} =\vec z_0 \in \R^d$ and a sequence of Nesterov momentum parameters $\{\beta_k\}_{k \in \N}\subset [0,1]$, the standard AGD \cite{Nesterov1983} update steps read
\begin{align}
    \vec z_{k} &= \vec x_k + \beta_k (\vec x_k - \vec x_{k-1}),\label{eq:AGD_Exp}\\
    \vec x_{k+1} &= \vec z_k - \frac{1}{L} \boldsymbol{\nabla} \mathcal J(\vec z_k).\label{eq:AGD_Up}
\end{align}
The combination of \eqref{eq:AGD_Up} and \eqref{eq:QuadEst} yields the decrease estimate
\begin{equation}\label{eq:DecreaseEst}
    \mathcal J(\vec x_{k+1}) - \mathcal J(\vec z_{k}) \leq - \frac{L}{2} \Vert \vec x_{k+1} - \vec z_{k} \Vert^2.
\end{equation}
However, the update \eqref{eq:AGD_Exp} does not necessarily guarantee the decrease of $\{\mathcal J(\vec z_{k})\}_{k \in \N}$.
Hence, for a predefined $a>1$, we propose to check the condition
\begin{equation}\label{eq:CondCheck}
    \boldsymbol{\nabla} \mathcal J (\vec z_{k})^T(\vec z_{k} - \vec z_{k-1}) + \frac{a\lambda}{2} \Vert \vec z_{k} - \vec z_{k-1} \Vert^2 \leq 0
\end{equation}
after having tentatively performed \eqref{eq:AGD_Exp}.
If \eqref{eq:CondCheck} is violated, we perform the plain gradient update $\vec z_{k} = \vec x_{k}$ (instead of \eqref{eq:AGD_Exp}) and apply a restart technique, as proposed in \cite{ODonoghue2015}.
In this case, we get from \eqref{eq:DecreaseEst} at the previous step that
\begin{equation}\label{eq:EstVio}
    \mathcal J(\vec z_{k}) - \mathcal J(\vec z_{k-1}) \leq - \frac{L}{2} \Vert \vec z_{k} - \vec z_{k-1} \Vert^2.
\end{equation}
Otherwise, the incorporation of \eqref{eq:WeakConvEst} implies that
\begin{align}\label{eq:EstNonVio}
    \mathcal J(\vec z_{k}) - \mathcal J(\vec z_{k-1}) \leq \boldsymbol{\nabla} \mathcal J (\vec z_{k})^T (\vec z_{k} - \vec z_{k-1}) + \frac{\lambda}{2} \Vert \vec z_{k} - \vec z_{k-1}\Vert^2 \leq  -\frac{(a - 1)\lambda}{2} \Vert \vec z_{k} - \vec z_{k-1} \Vert^2.
\end{align}
To sum up, the acceleration steps are kept only if they lead to a sufficient decrease of the objective.
Otherwise, the plain gradient-descent step guarantees this decrease.
\begin{remark}
    The gradient-based condition \eqref{eq:CondCheck} is more restrictive than the objective-based condition \eqref{eq:EstVio}. 
    However, \eqref{eq:CondCheck} is computationally cheaper to verify as it only involves inner products of already computed quantities.
    In practice, we observed that \eqref{eq:CondCheck} is rarely violated.
\end{remark}
\begin{remark}
    The parameter $a$ in \eqref{eq:CondCheck} must be greater than the weak-convexity modulus of $R_{\param}$ to ensure convergence.
    At this point, a precise estimate of this modulus---which we know to be bounded by one in our setting---weakens the condition \eqref{eq:CondCheck} and typically yields faster convergence.
    On the contrary, the reliance on a loose bound leads to frequent restarts, at the detriment of acceleration.
\end{remark}
Regarding Algorithm~\ref{alg:AGD}, we now derive a convergence result in Theorem~\ref{thm:ConvGD} using \cite[Theorem ~3.7]{OchCheBro2014}, which itself extends the seminal work \cite{AttBolSva2013} to the inertia setting.
Note that the objective \eqref{eq:InvProbLearn} is semi-algebraic since the profile function $\psi$ is piecewise-polynomial.
Hence, it satisfies the required (quite technical) KL property, see also \cite{AttBolSva2013}.
\begin{theorem}\label{thm:ConvGD}
    Assume that $\mathcal J$ satisfies the KL property and is bounded from below.
If the sequence $(\vec z_k)_{k \in \N}$ generated by Algorithm~\ref{alg:AGD} (without the stopping criterion) is bounded, then it converges to a critical point $\hat{\vec z}$ of $\mathcal J$.
Moreover, the sequence $(\vec z_k)_{k \in \N}$ has finite length, in the sense that
\begin{equation}\label{eq:finite_length}
    \sum_k \Vert \vec z_{k+1} - \vec z_k \Vert < \infty.
\end{equation}
\end{theorem}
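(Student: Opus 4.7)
The plan is to verify the hypotheses of the abstract convergence framework in \cite[Thm.~3.7]{OchCheBro2014}, which establishes convergence of inertial gradient-type methods on KL functions. Three ingredients have to be checked: (i) a sufficient-decrease inequality along the iterates, (ii) a relative-error bound linking $\boldsymbol{\nabla}\mathcal J(\vec z_k)$ to the iterate differences, and (iii) the KL property of $\mathcal J$. The KL property is granted because $\psi$ is piecewise polynomial and $\mat H$, $\|\cdot\|_2^2$ are polynomial, so $\mathcal J$ is semi-algebraic and therefore KL; the main work lies in establishing (i) and (ii) in a way that accommodates the safeguard/restart step of Algorithm~\ref{alg:AGD}.

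First, I would handle the sufficient-decrease property. By the two cases already carried out in the text before the theorem statement—estimate \eqref{eq:EstVio} when condition \eqref{eq:CondCheck} is violated (restart branch) and estimate \eqref{eq:EstNonVio} when it holds—one obtains the uniform bound
\begin{equation}
    \mathcal J(\vec z_{k}) - \mathcal J(\vec z_{k-1}) \leq - c \Vert \vec z_{k} - \vec z_{k-1}\Vert^2, \qquad c = \min\!\Bigl\{\tfrac{L}{2},\,\tfrac{(a-1)\lambda}{2}\Bigr\} > 0,
\end{equation}
since $a>1$. Because $\mathcal J$ is bounded from below, a telescoping argument then yields $\sum_k \Vert \vec z_k - \vec z_{k-1}\Vert^2 < \infty$ and, in particular, $\Vert \vec z_k - \vec z_{k-1}\Vert \to 0$ and the monotone sequence $\{\mathcal J(\vec z_k)\}$ converges.

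Next, I would establish the relative-error estimate. From the gradient step $\vec x_{k+1} = \vec z_k - \frac{1}{L} \boldsymbol{\nabla}\mathcal J(\vec z_k)$ one reads off $\boldsymbol{\nabla}\mathcal J(\vec z_k) = L(\vec z_k - \vec x_{k+1})$. Writing $\vec z_k - \vec x_{k+1} = (\vec z_k - \vec z_{k+1}) + (\vec z_{k+1} - \vec x_{k+1})$ and using that $\vec z_{k+1} - \vec x_{k+1}$ is either $\beta_{k+1}(\vec x_{k+1}-\vec x_k)$ in the acceleration branch (with $\beta_{k+1}=(t_k-1)/t_{k+1}\in[0,1)$) or $\vec 0$ in the restart branch, together with $\vec x_{k+1}-\vec x_k = (\vec x_{k+1}-\vec z_k)+(\vec z_k - \vec x_k)$ and an analogous expansion, one arrives at a bound of the form
\begin{equation}
    \Vert \boldsymbol{\nabla}\mathcal J(\vec z_k)\Vert \leq C\bigl(\Vert\vec z_{k+1}-\vec z_k\Vert + \Vert\vec z_k-\vec z_{k-1}\Vert\bigr)
\end{equation}
for a universal constant $C$ depending only on $L$ and on the uniform bound $\sup_k \beta_k \le 1$. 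This is exactly the relative-error condition required by \cite[H2]{OchCheBro2014}.

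Finally, I would invoke the KL machinery. Boundedness of $(\vec z_k)$ combined with continuity of $\boldsymbol{\nabla}\mathcal J$ and $\Vert\vec z_{k+1}-\vec z_k\Vert\to 0$ shows that the set of accumulation points is non-empty, and every accumulation point $\hat{\vec z}$ satisfies $\boldsymbol{\nabla}\mathcal J(\hat{\vec z}) = \vec 0$; moreover $\mathcal J$ is constant on this set. The standard KL-based argument from \cite{AttBolSva2013,OchCheBro2014}—integrating the KL inequality along the trajectory using the sufficient-decrease and relative-error bounds just established—then yields both the finite-length property \eqref{eq:finite_length} and, as a consequence, convergence of the full sequence $(\vec z_k)$ to a single critical point $\hat{\vec z}$. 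The main obstacle in this plan is the bookkeeping for the relative-error bound, because the two interleaved sequences $(\vec x_k)$ and $(\vec z_k)$ together with the restart mechanism make it necessary to control differences of $\vec z$-iterates through differences of $\vec x$-iterates and vice versa, with constants that remain uniform across the two branches of the algorithm.
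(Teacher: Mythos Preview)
Your proposal is correct and matches the paper's proof: both verify conditions (H1)--(H3) of \cite[Thm.~3.7]{OchCheBro2014}, with (H1) coming from \eqref{eq:EstVio}--\eqref{eq:EstNonVio} and (H3) from boundedness plus continuity of $\mathcal J$. The only point worth sharpening is your route to (H2): the decomposition $\vec x_{k+1}-\vec x_k=(\vec x_{k+1}-\vec z_k)+(\vec z_k-\vec x_k)$ reintroduces $\tfrac{1}{L}\boldsymbol{\nabla}\mathcal J(\vec z_k)$ on the right-hand side and, via $\vec z_k-\vec x_k=\beta_k(\vec x_k-\vec x_{k-1})$, a recursion whose coefficient $\beta_k\to 1$ is not bounded away from one; the paper sidesteps this by writing $\vec x_{k+1}-\vec x_k=(\vec z_k-\vec z_{k-1})-\tfrac{1}{L}\bigl(\boldsymbol{\nabla}\mathcal J(\vec z_k)-\boldsymbol{\nabla}\mathcal J(\vec z_{k-1})\bigr)$ and invoking the $L$-Lipschitz continuity of $\boldsymbol{\nabla}\mathcal J$ to obtain $\|\vec x_{k+1}-\vec x_k\|\le 2\|\vec z_k-\vec z_{k-1}\|$ directly, which immediately yields $\|\boldsymbol{\nabla}\mathcal J(\vec z_k)\|\le L\|\vec z_{k+1}-\vec z_k\|+2L\|\vec z_k-\vec z_{k-1}\|$.
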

\begin{proof}
According to \cite[Theorem ~3.7]{OchCheBro2014}, we need to check that
\setlist[enumerate]{leftmargin=12mm, label=\roman*)}
\begin{enumerate}
    \item[(H1)]\label{item:1} there exists $a>0$ such that $\mathcal J (\vec z_{k}) + a \Vert \vec z_{k} - \vec z_{k-1} \Vert^2  \leq \mathcal J (\vec z_{k-1})$ for all $k \in \N$;
    \item[(H2)] there exists $b>0$ such that $\Vert \boldsymbol{\nabla} \mathcal J (\vec z_k) \Vert \leq 2b (\Vert \vec z_{k} - \vec z_{k-1} \Vert + \Vert \vec z_{k+1} - \vec z_{k} \Vert)$ for all $k \in \N$;
    \item[(H3)] there exists a subsequence $(\vec z_{k_j})_{j \in \N}$ such that $\vec z_{k_j} \to \vec z$ and $\mathcal J(\vec z_{k_j}) \to \mathcal J (\vec z).$ 
\end{enumerate}
These three conditions are needed in order to conclude that the iterates $(\vec z_k)_{k \in \N}$ satisfy \eqref{eq:finite_length} and converge to a critical point $\hat{\vec z}$ of $\mathcal J$.
We have already verified (H1) in \eqref{eq:EstVio} and \eqref{eq:EstNonVio}.
For (H2), we first note that, if \eqref{eq:CondCheck} is violated, then it directly holds that
\begin{align}
    \Vert \boldsymbol{\nabla} \mathcal{J}(\vec z_k) \Vert =  L\Vert \vec z_{k+1} - \vec z_k\Vert.
\end{align}
Otherwise, $\Vert \boldsymbol{\nabla} \mathcal{J}(\vec z_k) \Vert = L\Vert \vec x_{k+1} - \vec z_k\Vert$, and it follows that
\begin{align}
    \Vert \boldsymbol{\nabla} \mathcal{J}(\vec z_k) \Vert &\leq L\Vert \vec z_{k+1} - \vec z_k\Vert + \beta_k L \Vert \vec x_{k+1} - \vec x_k \Vert\notag\\
    &\leq L\Vert \vec z_{k+1} - \vec z_k\Vert + L\Bigl\Vert \vec z_k - \frac{1}{L} \boldsymbol{\nabla} \mathcal J(\vec z_k) - \vec z_{k-1} + \frac{1}{L} \boldsymbol{\nabla} \mathcal J(\vec z_{k-1})\Bigr\Vert\notag\\
    &\leq L\Vert \vec z_{k+1} - \vec z_k\Vert + 2L\Vert \vec z_k - \vec z_{k-1}\Vert.
\end{align}
Since we assume that the sequence $(\vec z_k)_{k \in \N}$ is bounded and since $\mathcal J$ is continuous, also (H3) holds and the result follows from \cite[Theorem ~3.7]{OchCheBro2014}.
\end{proof}
\begin{remark}
    To ensure that $(\vec z_k)_{k \in \N}$ remains bounded, one can simply add a regularization term $\kappa \Vert \vec x \Vert^2$  to $\mathcal{J}$, where $\kappa >0$ can be arbitrarily small.
    %, which corresponds to an additional filter and potential.
    Then, $\mathcal{J}$ becomes coercive because the profile $\psi$ has linear extensions (see Lemma \ref{lm:coercive}). Therefore, $(\vec z_k)_{k \in \N}$ must remain bounded, otherwise $(\mathcal{J}(\vec z_k))_{k \in \N}$ could not be decreasing (see \eqref{eq:EstVio}). Empirically, however, this ``trick'' was found to be unnecessary as the iterates would remain bounded in all settings explored.
\end{remark}

\begin{lemma}
\label{lm:coercive}
    Let $R$ be a ridge regularizer of the form \eqref{eq:convexridgereg}, where the profiles $\psi_j$ are continuous, even, and have affine extensions\footnote{In the sense that there exists $t_0 \in\R$ such that $\psi_j$ is affine on $(-\infty,-t_0]$ and on $[t_0, +\infty)$.}. Then
    \begin{equation}
    \label{eq:coercivelm}
        \vec x \mapsto \|\mat H \vec x - \vec y\|_2^2 + R(\vec x) + \kappa \|\vec x\|_2^2
    \end{equation}
    is coercive for any $\kappa > 0$.
\end{lemma}
\begin{proof}
    By assumption all $\psi_j$ are affine on $[t_0, +\infty)$ with slope $u_j\in\R$.
   Hence, it holds for $|t|>t_0$ that $\psi_j(t) = \psi_j(|t|) = u_j (|t| - t_0) + \psi_j(t_0)$.
   Next, we define $v_j = \min_{|t|\leq t_0} (\psi_j(t) - u_j(|t| - t_0)) \leq \psi_j(t_0)$, which is well-defined since $\psi_j$ are continuous.
   By definition of $v_j$, it holds for any $t\in\R$ that $\psi_j(t)\geq u_j(|t| - t_0) + v_j$, and the objective in \eqref{eq:coercivelm} is lower bounded by
    \begin{equation}
         \vec x \mapsto \|\mat H \vec x - \vec y\|_2^2 + \kappa \|\vec x\|_2^2 + \sum_{j=1}^{d\times N_C} u_j (|\vec w_j^T \vec x| - t_0) + v_j ,
    \end{equation}
    which is coercive for any $\kappa > 0$.
        
\end{proof}
\begin{remark}
    For $\sqrt{\lambda_{\min}(\mat H^T \mat H)} \geq \lambda$, the objective \eqref{eq:InvProbLearn} is $(\sqrt{\lambda_{\min}(\mat H^T \mat H)} - \lambda)$-strongly convex and, hence, convex.
    Then, Algorithm \ref{alg:AGD} is guaranteed to converge to a global minimum of the objective.
    Otherwise, some results on convergence to local minima come into play, including with rates\cite{AttBolSva2013,Ochs2018}.
\end{remark}
As the problem is potentially non-convex, the initialization of the algorithm may influence the final reconstruction.
However, we did not observe such a dependence in our experimental settings.
Therefore, we opted for \emph{the common zero-initialization}.
A more sophisticated strategy may take as initial configuration the reconstruction of a trustworthy convex variational model such as \cite{GouNeuBoh2022}.
Then, Algorithm~\ref{alg:AGD} would be used to refine the reconstruction.
\subsection{Experiments}
\label{subsec:exp}
The WCRR-NN model trained in Section \ref{sec:Denoising} to perform denoising on the BSD500 dataset is now deployed to solve two image-reconstruction problems using safeguarded AGD.
For each setup, $\lambda$ and $\sigma$ are tuned over a validation set to maximize the peak signal-to-noise ratio (PSNR) with the coarse-to-fine routine from \cite{GouNeuBoh2022}, and then used for evaluation.
\paragraph{MRI}
The ground truth consists of fully sampled knee images with size $(320 \times 320)$ from the fastMRI dataset \cite{zbontar2018fastMRI}.
The corresponding MRI measurements are a subsampled version of the 2D Fourier transforms ($k$-space).
This subsampling is performed with a Cartesian mask that has two parameters: the acceleration $M_{\text{acc}}=4$ and the center fraction $M_{\text{cf}}=0.08$.
All the $\lfloor 320 M_{\text{cf}} \rfloor$ columns in the center of the k-space (low frequencies) are retained in full, while columns in the other region of the k-space are uniformly sampled.
More precisely, we are left with $\lfloor 320/M_{\text{acc}} \rfloor$ selected columns.
Lastly, both the real and imaginary parts of the measurements are corrupted by Gaussian noise with standard deviation $\sigma_{\mathbf{n}} = \num{e-4}$.
For validation and testing, we picked $10$ and $99$ images, respectively, all normalized to have a maximum value of one.
\begin{table}
\centering
\caption{PSNR and SSIM values for MRI and CT reconstruction experiments.}%
\label{table:reconstruction_performance_mri}
\subfloat[MRI]{
\begin{tabular}{lcc}
\toprule
Metric & PSNR & SSIM \\
\midrule
Zero-fill & 27.92 & 0.711 \\
\midrule
TV\cite{beck2009fast} & 32.03 & 0.7922 \\
%PnP-DnCNN \cite{ryu2019plug} & 31.42 & 0.805 \\
CRR-NN \cite{GouNeuBoh2022} & {33.14} & {0.842}\\
\midrule
WCRR-NN & 34.55 & 0.858\\
Prox-DRUNet \cite{HurLec2022} & 35.09 & 0.864\\
\bottomrule\\[-1.1ex]
\end{tabular}}
\hspace{.3cm}
\subfloat[CT]{
\begin{tabular}{lccc}
\toprule
Metric & PSNR & SSIM & Param. \\
\midrule
TV & 31.57 & 0.852 & 1\\
ACR \cite{MukDit2021} & 32.17 & 0.868 & \num{6e5}\\
CRR-NN & 32.87 & 0.862 & {\textbf{\num{5e3}}} \\
\midrule
AR \cite{lunz2018adversarial} & 33.62 & 0.875 & \num{2e7}\\
WCRR-NN & 34.06 & 0.895 & {\textbf{\num{1e4}}}\\
%PnP-DnCNN \cite{ryu2019plug} & 33.83 & 0.881 & \num{6e5}\\
Prox-DRUNet & 34.20 & 0.901 & \num{2e7}\\
\bottomrule
\end{tabular}}
\end{table}

\paragraph{CT}
To provide a comparison with adversarial regularization (AR) \cite{{lunz2018adversarial}} and its convex counterpart ACR \cite{MukDit2021} (see more details in Section \ref{subsec:results_discussion}), we include the sparse-view CT experiment proposed in \cite{MukDit2021}.
Its data consist of human abdominal CT scans for 10 patients, publicly available as part of the low-dose CT Grand Challenge \cite{M2016}.
For validation, 6 images are taken uniformly from the first patient of the training set used by \cite{MukDit2021}.
To benchmark all methods, we use the same set as \cite{MukDit2021}, made of 128 slices with size $(512\times 512)$ from a single patient, all normalized to have a maximum value of one.
The CT measurements are simulated using a parallel-beam acquisition geometry with 200 angles and 400 detectors.
These measurements are corrupted by Gaussian noise with standard deviation $\sigma_{\mathbf{n}}=2.0$.
\begin{figure*}[tbp]
    \centering
    \includegraphics[width=.95\textwidth]{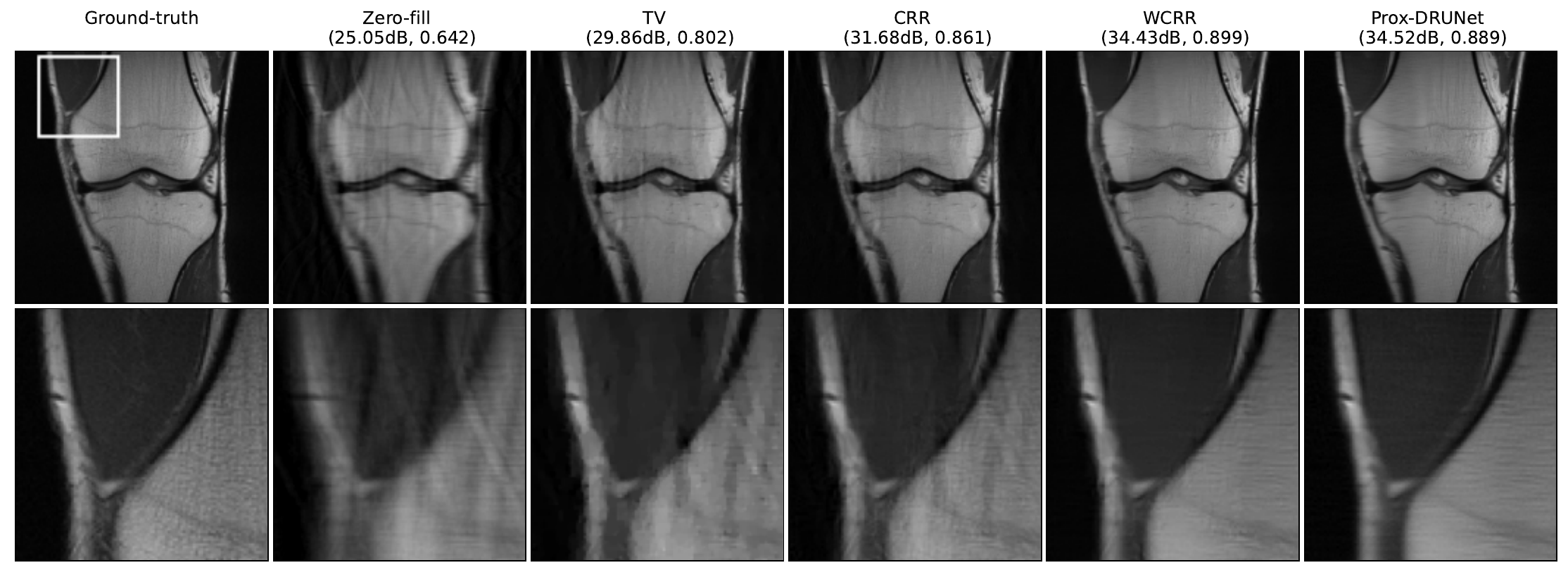}
    \caption{Reconstructions for the MRI experiment (metrics are PSNR and SSIM).}
    \label{fig:MRIreconstructions}
    \vspace{.6cm}
    \centering
    \includegraphics[width=.95\textwidth]{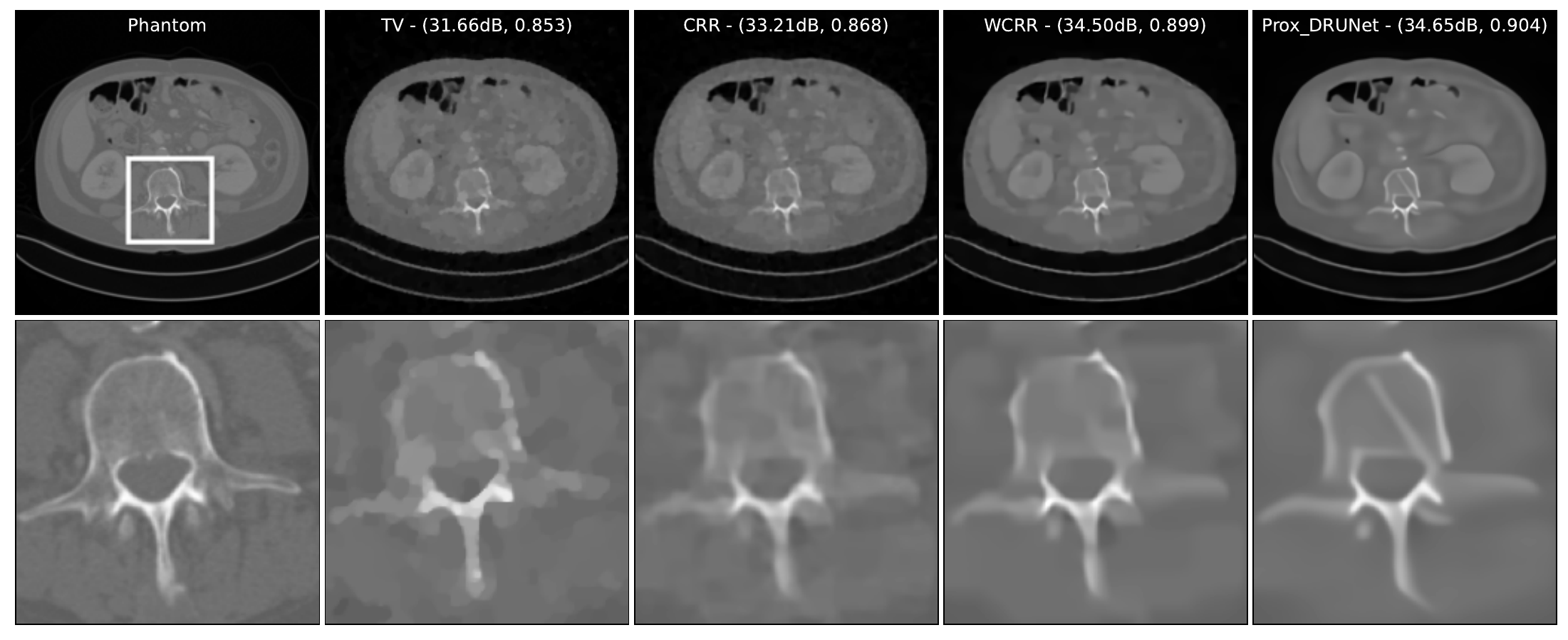}
    \caption{Reconstructions for the CT experiment (metrics are PSNR and SSIM).}
    \label{fig:CTreconstructions}
    \vspace{.6cm}
    \centering
    \includegraphics[width=0.9\textwidth]{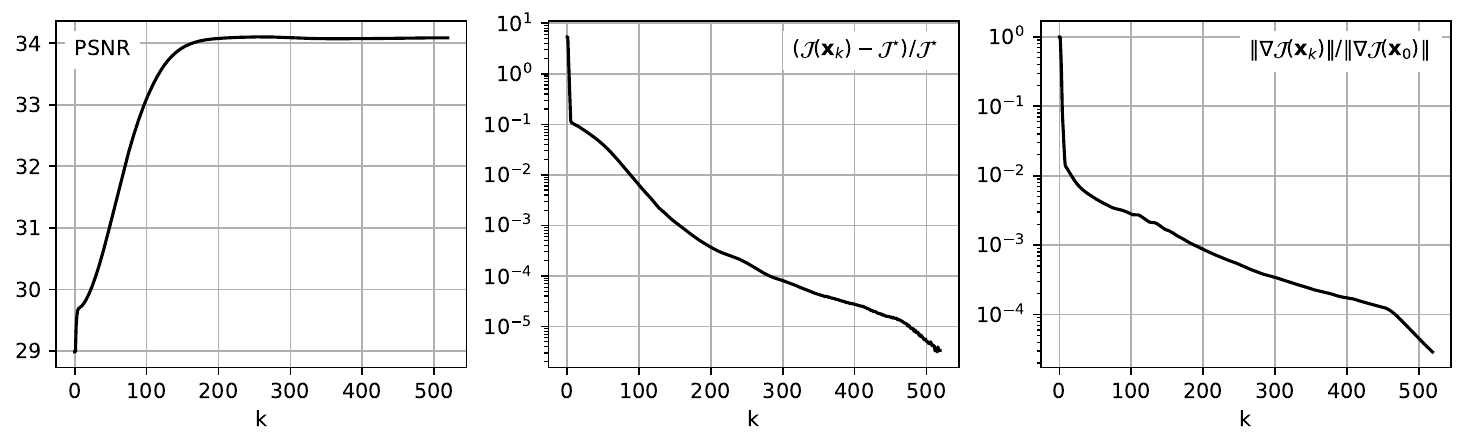}
        \caption{Example of convergence curves (MRI).}
    \label{fig:convergencemri}
\end{figure*}
\subsubsection{Comparison and Discussion}
\label{subsec:results_discussion}
The PSNR and structural similarity index measure (SSIM) values on the test sets are reported together with the parameter numbers in Table~\ref{table:reconstruction_performance_mri}.
The hyperparameters of each method are tuned to maximize the average PSNR over the validation sets with the coarse-to-fine method described in \cite{GouNeuBoh2022}.
We observe that WCRR-NNs outperform the other energy-based methods and are close to the PnP approach.
For both problems, reconstructions are provided in Figures~\ref{fig:MRIreconstructions} and~\ref{fig:CTreconstructions}, and examples of convergence curves for SAGD are given in Figures \ref{fig:convergencemri} and \ref{fig:convergencect}.
\begin{figure}[tbp]
\centering
    \includegraphics[width=0.9\textwidth]{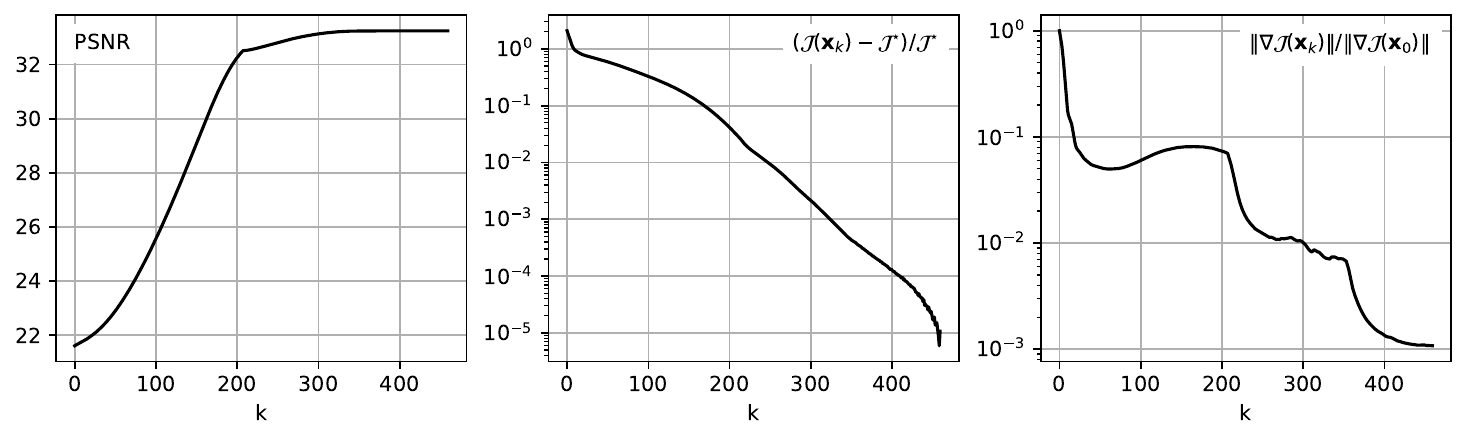}
        \caption{Example of convergence curves (CT).}
    \label{fig:convergencect}
\end{figure}

%Here, we restricted our attention to methods with open-source code and practical implementations.
Overall, the results illustrate the universality and efficiency of our method.
In the following, we briefly comment on the competing methods used in our evaluation.
\paragraph{Convex Models}
The TV and CRR-NN reconstructions serve as references for convex methods.
They are computed via the FISTA algorithm \cite{beck2009fast} with a nonnegativity constraint.
Similarly to denoising, we observe that the move from convex to weakly-convex regularization leads to significant improvements in quality.
In the MRI experiment, the aliasing artifacts introduced by CRR-NN, and even more by TV, are suppressed by the weakly convex regularizer.
In the CT experiment, the TV reconstruction includes staircasing artifacts, and CRR-NN has a slight tendency to blur the edges.
On the contrary, our weakly convex regularizer is able to produce sharp edges without blur, but sometimes at the cost of over-smoothing some background details.
%This is not surprising due to the interpretation as sparsity prior and the inherited universality.
Note that convex models are still better understood from a theoretical perspective because convergence to global optima can be guaranteed.
Hence, they might be favorable in certain settings.
\paragraph{Adversarial Regularization}
As references for explicit regularization approaches, we provide a comparison with the convex ACR~\cite{MukDit2021,MukSch2021} framework and its non-convex counterpart AR~\cite{lunz2018adversarial}.
Bypassing a gradient-based parameterization, these models parameterize the regularizer $R$ directly and train it in an adversarial manner.
As the regularizers of \cite{MukDit2021,MukSch2021,lunz2018adversarial} are tailored to a specific inverse problem, we can only provide a comparison for their CT experiment.
Even though ACR and AR have significantly more parameters than (W)CRR-NN, they perform less well.
The numerical results present favorable evidence regarding the effectiveness of the parameterization used for WCRR-NNs. Note, however, that drawing a definitive conclusion on the parameterization only is delicate since AR and ACR rely on a different training procedure.
\paragraph{Plug-and-Play}
Our approach bears some resemblance with PnP methods since $R$ is learned on a generic denoising task.
Hence, it is natural to compare WCRR-NNs with a deep CNN version of this approach.
Among countless variations, the recently proposed framework \cite{HurLec2022}, which we refer to as Prox-DRUNet, is the closest to ours in terms of theoretical guarantees and existence of an underlying regularizer (see Section~\ref{subsec:denoisingcomparison}
for a discussion).
%It is also similar to SOTA methods, although the latter have less theoretical guarantees.
We use the pre-trained DRUNet-based proximal denoiser from \cite{HurLec2022} within the PnP-PGD (proximal gradient descent) for CT and the PnP-DRS (Douglas-Rachford splitting) for MRI\footnote{PnP-DRS is well-suited to settings where the proximal operator of the data term can be efficiently computed, which includes MRI but not CT.}.
This approach, which may be considered the state of the art in energy-related PnP, yields slightly better PSNR and SSIM than our method. Note that Prox-DRUNet involves $3$ orders of magnitude more parameters and days of training.

In the MRI experiment, both Prox-DRUNet and WCRR-NN are able to avoid the aliasing artifacts typically generated by the methods that rely on a convex regularizer.
In the CT experiment, the visual inspection of the reconstructions reveals that quality metrics are only part of the story.
While the output of Prox-DRUNet always looked remarkably realistic, it was more prone to hallucination/artifact exaggeration, especially for hard problems such as the CT experiment.
In that respect, the Prox-DRUNet reconstruction in Figure~\ref{fig:CTreconstructions} is particularly telling:
It includes an elongated structure that is not present in the ground truth, nor in any other reconstruction.
While such \emph{enhanced} images are desirable in many settings and lead to state-of-the-art denoising performance, they raise major concerns for sensitive applications, including medical imaging.
Regarding the theoretical convergence guarantees of PnP-PGD, the necessary Lipschitz constraint is only enforced by regularization during training.
Unfortunately, it is infeasible to verify if it is met after training.
In practice indeed, it seems to be not fully met \cite{HurLec2022}.

\section{Conclusion}\label{sec:Conclusions}
In this paper, we proposed a method for the learning of a $1$-weakly convex regularizer that leads to a convex denoising functional.
To the best of our knowledge, this is the first instance of convex non-convex schemes that surpasses BM3D for the denoising of natural images.
A key feature of our method is that the architecture deployed to parameterize the regularizer is shallow.
Thereby, the role of each parameter is transparent:
Parameters are adjusted to produce a sparsity-promoting prior.
Although the regularization of inverse problems with the learned regularizer does not necessarily lead to a convex objective, gradient-based optimization methods are empirically effective and produce high-quality reconstructions.
In the future, a better understanding of WCRR-NNs might help to boost the performance of lightweight and robust data-driven image-reconstruction models even further.
This includes the dependence of the learned regularizer on the modality and/or on the image domain used during training.
It is indeed expected, for instance, that a fine-tuning of the regularizer with modality-specific prior knowledge will improve the quality of the reconstruction.

\section*{Acknowledgments}
The research leading to this publication was supported by the European Research Council (ERC) under European Union’s Horizon 2020 (H2020), Grant Agreement - Project No 101020573 FunLearn, and by the Swiss National Science Foundation, Grant 200020 184646/1.
The authors are thankful
to Pakshal Bohra and Stanislas Ducotterd for helpful discussions.
Finally, the authors want to thank the anonymous reviewers for their valuable comments.
\bibliographystyle{abbrv}
\bibliography{references,references_cvx}
\end{document}